\newtheorem{thm}{Theorem}[section]
\newtheorem{conjecture}{Conjecture}
\newtheorem{lemma}[thm]{Lemma}
\newtheorem{example}{Example}
\newtheorem*{problem1}{Problem}
\newcommand{\lca}{\ensuremath{\operatorname{lca}}}
\newcommand{\Mo}{\ensuremath{M^\odot}}
\renewcommand{\varnothing}{\odot}
\newcommand{\de}{\ensuremath{\delta}}
\title{On Symbolic Ultrametrics, Cotree Representations, and Cograph Edge Decompositions and Partitions}
\author{Marc Hellmuth \\
Center for Bioinformatics, Saarland University, \\
Building E 2.1, D-66041 Saarbr{\"u}cken, Germany \\
\texttt{mhellmuth@bioinf.uni-sb.de}\\
\and 
Nicolas Wieseke\\
Parallel Computing and Complex Systems Group, \\Department of
Computer Science, Leipzig University, \\
Augustusplatz 10, D-04109 Leipzig, Germany\\
\texttt{wieseke@informatik.uni-leipzig.de}
}
\begin{document}

\maketitle              

\begin{abstract}
Symbolic ultrametrics define edge-colored complete graphs $K_n$ and
yield a simple tree representation of $K_n$. We discuss, under which conditions
this idea can be generalized to find a symbolic ultrametric that, in addition,
distinguishes between edges and non-edges of arbitrary graphs $G=(V,E)$ and thus, 
yielding a simple tree representation of $G$. We prove that 
such a symbolic ultrametric can only be defined for $G$ if and
only if $G$ is a so-called cograph. A cograph is uniquely determined by a
so-called cotree. As not all graphs are cographs, we ask, furthermore, 
what is the minimum number of cotrees needed to represent the topology of $G$. 
The latter problem is equivalent to find
an optimal cograph edge $k$-decomposition $\{E_1,\dots,E_k\}$ of $E$ so that each
subgraph $(V,E_i)$ of $G$ is a cograph. An
upper bound for the integer $k$ is derived and it is 
shown that determining whether a graph has a 
cograph 2-decomposition, resp., 2-partition is NP-complete.
\end{abstract}

\sloppy
\section{Introduction}

Given an arbitrary edge-colored complete graph $K_n =(V,E)$ on $n$ vertices, 
B{\"o}cker and Dress \cite{BD98} asked, 
whether there is a tree representation of this $K_n$, i.e., 
a rooted tree $T=(W,F)$ with leaf set $V$
together with a labeling $t$ of the non-leaf vertices in $W\setminus V$ so that 
the least common ancestor $\lca(x,y)$ of distinct leaves 
$x$ and $y$ is labeled with the respective color
of the edge $[x,y]\in E$.
The pair $(T, t)$ is then called symbolic representation of the edge-colored graph $K_n$.
The authors showed, 
that there is a 
symbolic representation  $(T,t)$ if and only if the map $\delta$ that assigns 
colors or symbols to the edges in $E$
fulfills the properties of a so-called symbolic ultrametric \cite{BD98}. 
Such maps are crucial 
for the characterization of relationships between genes or proteins, 
so-called orthology relations \cite{HHH+13,HLS+14, LE:15}, 
that lie at the heart of many phylogenomic studies.

Inspired by the work of B{\"o}cker and Dress, we address the following
problem: Does there exist, for an arbitrary undirected graph $G=(V,E)$ a symbolic
ultrametric $\delta:V \times V \to M$ and thus, a symbolic representation
$(T,t)$ of $G$ so that one can distinguish between edges and non-edges of $G$?
In other words, we ask for a coloring $\delta$ of the edges
$[x,y]\in E$, as well as the non-edges $[x,y]\not\in E$, 
so that the topology of $G$ can be displayed by a rooted 
vertex-labeled tree $(T,t)$ s.t.\ for all 
distinct vertices $x,y\in V$ the labeling of the lowest common ancestor 
$\lca(x,y)$ is equal to  $\delta(x,y)$. 
The first result of this contribution provides
that such a symbolic ultrametric 
can only be defined for $G$ if and only if $G$ is a cograph. 
This, in particular, establishes another new characterization of
cographs.

Cographs are characterized by the absence of induced
paths $P_4$ on four vertices. Moreover, Lerchs \cite{lerchs71,lerchs72}
showed that each cograph $G=(V,E)$ is associated with a unique rooted tree $T(G)$,  
called \emph{cotree}.
Obviously, not all graphs are cographs and thus, don't have a cotree
representation. Therefore, we ask for 
the minimum number of cotrees that are needed to represent the 
structure of a given graph $G=(V,E)$ in 
an unambiguous way. As it will turn out, this problem is equivalent 
to find a decomposition $\Pi = \{E_1,\dots, E_k\}$ of $E$
(the elements of $\Pi$ need not necessarily be disjoint)
for the least integer $k$, 
so that each subgraph $G_i = (V,E_i)$,\ $1\leq i\leq k$ is a cograph.
Such a decomposition is called cograph edge $k$-decomposition, 
or cograph $k$-decomposition, for short. 
If the elements of $\Pi$ are in addition pairwise disjoint, we call 
$\Pi$ a cograph $k$-partition. 
We will prove that finding the least integer $k\geq 2$ so that
$G$ has a cograph $k$-decomposition or a cograph $k$-partition
is an NP-hard problem. 
Moreover, upper bounds for the integer $k$ for any cograph $k$-decomposition are
derived.
These findings complement results known about so-called 
cograph vertex partitions \cite{Achlioptas199721,GN:02,DMO:14, MSC-Zheng:14}.

\section{Basic Definitions}

\noindent
\emph{Graph.}
In what follows, we consider undirected simple graphs $G=(V,E)$ with vertex
set $V(G) = V$ and edge set $E(G) = E\subseteq \binom{V}{2}$. The \emph{complement graph} $G^c=(V,E^c)$ of
$G=(V,E)$, has edge set $E^c= \binom{V}{2}\setminus E$.
The graph $K_{|V|} = (V,E)$ with $E=\binom{V}{2}$ is called \emph{complete graph}. 
A graph $H=(W,F)$ is an \emph{induced subgraph} of $G=(V,E)$, if $W\subseteq V$
and all edges $[x,y]\in E$ with $x,y\in W$ are contained in $F$.
The \emph{degree} $\deg(v)=|\{e\in E\mid v\in e\}|$ of a vertex $v\in V$
is defined as the number of edges that contain $v$. The maximum degree 
of a graph is denoted with $\Delta$.

\vspace{0.4em}
\noindent
\emph{Rooted Tree.}
A connected graph $T$ is a \emph{tree}, if $T$ does not contain cycles. A
vertex of a tree $T$ of degree one is called a \emph{leaf} of $T$ and all other
vertices of $T$ are called \emph{inner} vertices. 
The set of inner vertices of $T$ is denoted by $V^0$. A \emph{rooted tree}
$T=(V,E)$ is a tree that contains a distinguished vertex $\rho_T\in V$
called the \emph{root}. 
The first inner vertex  $\lca_{T}(x,y)$ that lies on both unique paths from 
distinct leaves $x$, resp., $y$ to the root, is called \emph{most recent
common ancestor of $x$ and $y$}.
If there is no danger of ambiguity, we will write $\lca(x,y)$ rather then	
$\lca_T(x,y)$. 

\vspace{0.4em}
\noindent
\emph{Symbolic Ultrametric and Symbolic Representation.}
In what follows, the set $M$ will always denote a non-empty finite
set, the symbol $\varnothing$ will always denote a special
element not contained in $M$, and $\Mo := M\cup\{\varnothing\}$. 
Now, suppose $X$ is an arbitrary non-empty set and 
$\delta:X \times X \to \Mo$ a map.
We call $\delta$ a \emph{symbolic ultrametric}
if it satisfies the following conditions:
\begin{itemize}
  \item[(U0)] $\delta(x,y)=\varnothing$ if and only if $x=y$; 
  \item[(U1)] $\delta(x,y) = \delta(y,x)$ for all $x,y \in X$, i.e.\ $\delta$ is
				symmetric; 
  \item[(U2)] $|\{\delta(x,y), \delta(x,z),\delta(y,z)\}| \le 2$
    for all $x,y,z \in X$; and 
  \item[(U3)] there exists no subset $\{x,y,u,v\}\in {X\choose 4}$ such that
  $    \delta(x,y)= \delta(y,u)  = \delta(u,v) \neq 
      \delta(y,v)= \delta(x,v)  = \delta(x,u).$
\end{itemize}
Now, suppose that $T=(V,E)$ is a rooted tree with leaf set $X$ 
and that $t:V\to \Mo$ is a map 
such that $t(x)=\varnothing$ for all $x\in X$.
To the pair $(T,t)$ we associate the map 
$d_{(T,t)}$ on $X \times X$ by setting, for all $x,y \in X$,
\begin{equation}
d_{(T,t)}: X \times X \to \Mo;  d_{(T,t)}(x,y) = t(\lca_{T}(x,y)).
\end{equation}
Clearly this map is symmetric and satisfies (U0). We call the pair
$(T,t)$ a \emph{symbolic representation} of a map $\delta:X \times X \to \Mo$, 
if $\delta(x,y)=d_{(T,t)}(x,y)$ holds for all $x,y\in X$. 
For a subset $W\subseteq X\times X$ we denote with $\de(W)$
the restriction of $\de$ to the set $W$. 

\vspace{0.4em}
\noindent
\emph{Cographs and Cotrees.} 
Complement-reducible graph, cographs for short, 
are defined as the class of graphs formed from a single vertex under 
the closure of the operations of union and complementation, namely: (i) 
a single-vertex graph is a cograph;
(ii) the disjoint union of cographs is a cograph; 
(iii) the complement of a cograph is a cograph. 
Alternatively, a cograph can be defined as  a 
$P_4$-free graph
(i.e.\ a graph such that no four vertices induce a subgraph that is a path
of length 3), although there are a number of equivalent characterizations
of such graphs (see e.g.\ \cite{Brandstaedt:99} for a survey).
It is well-known in the literature concerning cographs that,
to any cograph $G$, one can associate a canonical \emph{cotree}
$T(G)=(V,E)$. This is a rooted tree, leaf set equal to the
vertex set $V(G)$ of $G$ and inner vertices that represent so-called "join"
and "union" operations together with a labeling map $t:V^0\to \{0,1\}$ such
that for all $[x,y]\in E(G)$ it holds that $t(\lca(x,y)) = 1$, and
$t(v)\neq t(w_i)$ for all $v\in V^0$ and all children $w_1,\ldots, w_k\in V^0$
of $v$, (cf. \cite{Corneil:81}).


\vspace{0.4em}
\noindent
\emph{Cograph $k$-Decomposition and Partition, and Cotree Representation.}
Let  $G=(V,E)$ be an arbitrary graph. 
A decomposition $\Pi=\{E_1,\dots E_k\}$ of $E$ is a called \emph{(cograph) $k$-decomposition}, 
if each subgraph $G_i = (V,E_i)$,\ $1\leq i\leq k$ of $G$ is a cograph.
We call $\Pi$ a \emph{(cograph) $k$-partition} if $E_i\cap E_j=\emptyset$, for all 
distinct $i,j\in \{1,\dots,k\}$.
A $k$-decomposition $\Pi$ is called \emph{optimal}, if $\Pi$ has the least number $k$
of elements among all cograph decompositions of $G$. Clearly, for a cograph only
k-decompositions with $k=1$ are optimal.
A $k$-decomposition $\Pi=\{E_1,\dots E_k\}$ is \emph{coarsest}, if no elements
of $\Pi$ can be unified, so that the resulting decomposition is a 
cograph $l$-decomposition, with $l< k$. In other words, $\Pi$ is 
coarsest, if for all subsets $I\subseteq \{1,\dots,k\}$ with $|I|>1$
it holds that $(V, \cup_{i\in I} E_i)$ is not a cograph. 
Thus, every optimal $k$-decomposition is also always a coarsest one. 

A graph $G = (V,E)$ is \emph{represented by a set of cotrees} $\mathbb T = \{T_1,\dots,T_k\}$, 
each $T_i$ with leaf set $V$, if and only if for each edge $[x,y]\in E$ there is a tree $T_i\in \mathbb T$
with $t(\lca_{T_i}(x,y)) =1$.

\vspace{0.4em}
\noindent
\emph{The Cartesian (Graph) Product}
$G\Box H$ has vertex set $V(G\Box H)=V(G)\times
V(H)$; two vertices $(g_1,h_1)$, $(g_2,h_2)$ are adjacent in $G\Box H$ if
$[g_1,g_2]\in E(G)$ and $h_1=h_2$, or $[h_1,h_2]\in E(H)$ and $g_1 =
g_2$. It is well-known that the Cartesian product is associative, 
commutative and that the single vertex graph $K_1$ serves as unit element
\cite{Hammack:11a}. Thus, the product $\Box_{i=1}^n G_i$ 
of arbitrary many factors $G_1,\dots, G_n$ is well-defined. 
For a given product $\Box_{i=1}^n G_i$, we define 
the $G_i$-layer $G_i^w$ of $G$ (through vertex $w$ that has coordinates $(w_1,\dots,w_n)$) 
as the induced subgraph with vertex set
$V(G_i^w)=\{v = (v_1,\dots,v_n)\in \times_{i=1}^n V(G_i) \mid v_j=w_j, \text{ for all } j\neq i\}$. 
Note, $G_i^w$ is isomorphic to $G_i$ for all $1\leq i\leq n$, $w\in V(\Box_{i=1}^n G_i)$. 
The \emph{$n$-cube} $Q_n$ is the Cartesian product
$\Box_{i=1}^n K_2$. 

\section{Symbolic Ultrametrics}

Symbolic ultrametrics and respective representations as event-labeled
trees, have been first characterized by B{\"o}cker and Dress \cite{BD98}.

\begin{thm}[\cite{BD98,HHH+13}]\label{bd}
Suppose $\delta: V \times V\to\Mo$ is a map. Then there is a 
symbolic representation of $\delta$ 
if and only if $\delta$ is a symbolic ultrametric. 
Furthermore, this representation 
can be computed in polynomial time. 
\end{thm}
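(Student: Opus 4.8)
The plan is to prove both implications separately and then read off the algorithmic claim from the construction. For the \emph{only if} direction, assume $(T,t)$ represents $\delta$, so that $\delta(x,y)=t(\lca_T(x,y))$. Properties (U0) and the symmetry (U1) are immediate from the definition of $d_{(T,t)}$. For (U2) I would invoke the elementary fact that among any three leaves two of the three pairwise least common ancestors coincide: the unique ``deepest'' pair branches off strictly below $\lca(x,y,z)$, while the other two pairs share that common ancestor. Hence at most two distinct inner vertices, and therefore at most two labels, occur among $\delta(x,y),\delta(x,z),\delta(y,z)$. For (U3) I would refine this observation: within a $4$-set $\{x,y,u,v\}$, apply the triangle fact to each of the four triangles, noting that in each triangle the uniquely colored pair must be the deepest pair. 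Writing out these depth constraints under the hypothetical pattern $\delta(x,y)=\delta(y,u)=\delta(u,v)\neq\delta(y,v)=\delta(x,v)=\delta(x,u)$, one finds them cyclically inconsistent (for instance they force $\lca(u,v)$ to be simultaneously equal to, and strictly below, $\lca(y,u)$), so the forbidden configuration cannot arise in a tree.

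For the \emph{if} direction I would argue by induction on $|V|$, building $(T,t)$ top-down. If $|V|=1$ the tree is a single leaf. For $|V|\geq 2$ the goal is to find a symbol $r\in M$ and a partition $V=V_1\cup\cdots\cup V_p$ with $p\geq 2$ such that $\delta(x,y)=r$ whenever $x,y$ lie in distinct blocks. Given such a splitting symbol, I would create a new root labelled $r$, attach to it the subtrees obtained by applying the induction hypothesis to each restriction $\delta|_{V_i\times V_i}$ (a restriction of a symbolic ultrametric is again a symbolic ultrametric, so the hypotheses are inherited), and then verify that the resulting labelled tree represents $\delta$: pairs inside a common block are handled by the recursively built subtree, while cross-block pairs receive the root label $r$ by construction.

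The crux, and the step I expect to be the main obstacle, is the existence of such a splitting symbol. For a fixed $r$ the natural candidate blocks are the connected components of the graph $G_{\neq r}=(V,\{[x,y]:\delta(x,y)\neq r\})$, since any two vertices in different components are automatically joined by a pair of color $r$; thus it suffices to exhibit one $r$ for which $G_{\neq r}$ is disconnected. I would prove this by contradiction, assuming $G_{\neq r}$ is connected (and spanning) for every symbol $r$. Here (U2) acts as a local consistency constraint — on every triangle at most two colors appear — which lets me propagate a single color along a shortest path and control how colors can change, whereas (U3) is precisely what rules out two color classes ``interleaving'' like the two complementary induced $P_4$'s on four vertices. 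The argument I envision picks an $r$-colored pair, takes a shortest $r$-avoiding path between its endpoints, and uses (U2) to pin down the colors along it and (U3) to exclude the crossing of two differently colored shortest paths, ultimately exhibiting a forbidden $4$-set and contradicting the assumption. This is exactly the multicolored generalization of the classical dichotomy that a $P_4$-free graph on at least two vertices is disconnected or has disconnected complement, which foreshadows the cograph connection of the following section.

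Finally, the polynomial-time claim follows directly from the construction. Each recursive step tests the finitely many symbols $r$, computes the connected components of the corresponding $G_{\neq r}$ (polynomial per symbol), chooses a disconnecting $r$, and recurses on strictly smaller vertex sets. Since the recursion tree has $O(|V|)$ inner vertices and each step costs a polynomial in $|V|$ and $|M|$, the representation $(T,t)$ is produced in polynomial time.
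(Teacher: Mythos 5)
The paper does not prove this statement at all: it is quoted verbatim from B{\"o}cker--Dress and from \cite{HHH+13}, so there is no in-paper argument to compare yours against. Judged on its own terms, your \emph{only if} direction is essentially complete and correct: the observation that among three leaves the uniquely labelled pair must be the strictly deepest one yields, for the forbidden (U3) pattern, the cyclic chain $\lca(x,u)\prec\lca(x,y)\prec\lca(y,v)\prec\lca(u,v)\prec\lca(x,u)$, which is indeed a contradiction. The architecture of the \emph{if} direction (recurse on the components of $G_{\neq r}$ for a splitting colour $r$, label the new root $r$) is also the right one, and the runtime analysis is fine \emph{conditional on} that construction.

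The genuine gap is the existence of the splitting colour, which you correctly identify as the crux but do not actually prove. Your shortest-$r$-avoiding-path plan works only when such a path has length at least $3$: if $x=v_0,v_1,\dots,v_k=y$ is a shortest path in $G_{\neq r}$ with $k\ge 3$, then minimality forces all chords to have colour $r$, (U2) forces $\delta(v_0,v_1)=\delta(v_1,v_2)=\delta(v_2,v_3)$, and $\{v_0,v_1,v_2,v_3\}$ violates (U3). But when every $r$-coloured pair is joined by an $r$-avoiding path of length $2$ (a monochromatic cherry of some other colour, which is all that (U2) gives you on a triangle), no contradiction arises locally, and your sketch does not say how to proceed. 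This is exactly where the real work of the B{\"o}cker--Dress theorem lives. The standard ways to close it are (i) Gallai's partition theorem for rainbow-triangle-free colourings of $K_n$, which reduces the between-block colouring to at most two colours, after which (U3) says the reduced two-coloured graph is $P_4$-free, hence it or its complement is disconnected --- this is precisely the ``multicoloured cograph dichotomy'' you allude to, but it needs Gallai's theorem as input; or (ii) the route through Theorem~\ref{thm:cograph} of the present paper, i.e., first establish (U2$'$)/(U3$'$) and then assemble the tree from the cotrees/modular decomposition of the colour classes. Without one of these, the induction step, and with it the polynomial-time claim, is not yet established.
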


Let $\delta:V \times V \to \Mo$ be a map satisfying 
Properties (U0) and (U1). For each
fixed $m\in M$, we define an undirected graph $G_m:=G_m(\delta)=(V,E_m)$ 
with edge set
\begin{equation}
E_m =  
\left\{ \{x,y\} \mid \delta(x,y)=m, \,\, x,y \in V \right\}.
\end{equation}
Thus, the map $\delta$ can be considered as an edge coloring of 
a complete graph $K_{|V|}$, where each edge $[x,y]$ obtains
color $\delta(x,y)$.
Hence, $G_m$ denotes the subgraph of the edge-colored graph 
$K_{|V|}$, that contains all edges colored with $m\in M$. 
The following result establishes the connection between
symbolic ultrametrics and cographs. 

\begin{thm}[\cite{HHH+13}]\label{thm:cograph}
  Let $\delta:V\times V\to \Mo$ be a map satisfying Properties 
  (U0) and (U1). Then 
  $\delta$ is a symbolic ultrametric if and only if 
  \begin{itemize}
  \item[(U2')] For all $\{x,y,z\}\in\binom{V}{3}$ there is an $m\in M$ 
	such that
    $E_m$ contains two of the three edges $\{x,y\}$, $\{x,z\}$, and
    $\{y,z\}$.
  \item[(U3')] $G_m$ is a cograph for all $m\in M$. 
  \end{itemize}
\end{thm}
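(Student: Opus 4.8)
The plan is to keep (U0) and (U1) fixed throughout and show that the conjunction of (U2) and (U3) is equivalent to the conjunction of (U2') and (U3'). I would carry this out in three steps: first reduce (U2) to (U2'), and then, under (U2')$\equiv$(U2), prove the two implications (U3')$\Rightarrow$(U3) and (U3)$\Rightarrow$(U3') separately by contraposition.

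First I would dispose of the easy equivalence (U2)$\Leftrightarrow$(U2'). For a triple $\{x,y,z\}\in\binom{V}{3}$ of \emph{distinct} vertices, (U0) guarantees that all three values $\delta(x,y),\delta(x,z),\delta(y,z)$ lie in $M$, and the cardinality bound $|\{\delta(x,y),\delta(x,z),\delta(y,z)\}|\le 2$ holds exactly when at least two of these values coincide, i.e. when some color class $E_m$ ($m\in M$) contains two of the three edges. For a degenerate triple (some two vertices equal) property (U2) is automatic, since then one value is $\varnothing$ and the other two agree. Hence (U2) and (U2') are the same condition, and it suffices to treat (U3) versus (U3') under this common hypothesis.

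The direction (U3')$\Rightarrow$(U3) I would prove directly by contraposition, and it needs neither (U2') nor (U2). Suppose (U3) fails, witnessed by $\{x,y,u,v\}$ with $\delta(x,y)=\delta(y,u)=\delta(u,v)=m_1$ and $\delta(y,v)=\delta(x,v)=\delta(x,u)=m_2\ne m_1$. In the induced subgraph of $G_{m_1}$ on $\{x,y,u,v\}$ the edges are precisely the three pairs of color $m_1$, namely $\{x,y\},\{y,u\},\{u,v\}$, while the remaining three pairs carry color $m_2\ne m_1$ and are thus non-edges. This induced subgraph is therefore the path $x$--$y$--$u$--$v$, an induced $P_4$, so $G_{m_1}$ is not a cograph and (U3') fails. (The underlying picture is that the forbidden pattern of (U3) is exactly a $2$-edge-colored $K_4$ whose two color classes are complementary $P_4$'s, which is consistent because $P_4$ is self-complementary.)

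The reverse implication (U3)$\Rightarrow$(U3') is the crux, and here (U2') is indispensable. Arguing again by contraposition, assume some $G_m$ is not a cograph; being $P_4$-free is equivalent to being a cograph, so $G_m$ contains an induced path $a$--$b$--$c$--$d$, giving $\delta(a,b)=\delta(b,c)=\delta(c,d)=m$ and $\delta(a,c),\delta(a,d),\delta(b,d)\ne m$. The obstacle is that, a priori, these three non-edges could carry three \emph{different} colors, whereas the forbidden configuration in (U3) requires them to share a single color $m'$. This is precisely what (U2') forces: applied to the triple $\{a,c,d\}$ it yields $\delta(a,c)=\delta(a,d)=:m'$ (since $\delta(c,d)=m$ differs from both), and applied to $\{a,b,d\}$ it then yields $\delta(b,d)=m'$ as well (since $\delta(a,b)=m$ differs from $\delta(a,d)=m'\ne m$). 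Setting $(x,y,u,v):=(a,b,c,d)$ reproduces exactly the pattern barred by (U3), so (U3) fails. Combining this with the previous paragraph and the reduction of (U2) to (U2') establishes the stated equivalence.
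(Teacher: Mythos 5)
Your argument is correct and complete: the reduction of (U2) to (U2') via (U0), the observation that a violation of (U3) is exactly a $K_4$ whose color class $E_{m_1}$ induces a $P_4$, and the use of (U2') on the triples $\{a,c,d\}$ and $\{a,b,d\}$ to force the three non-edges of an induced $P_4$ in $G_m$ onto a single color $m'$ are all sound, and together they establish the stated equivalence. Note that the paper itself gives no proof of this theorem --- it is imported verbatim from the cited reference \cite{HHH+13} --- so there is no in-paper argument to compare against; your self-contained proof is a valid substitute and matches the standard argument for this result.
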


Assume now, we have given an arbitrary  subgraph
$G=(V,E)\subseteq K_{|V|}$. Let $\delta$ be a map defined on 
$V \times V$ so that edges $e\in E$ obtain a different color
then the non-edges $e\in E(K_{|V|})\setminus E$ of $G$. 
The questions then arises, whether such a map fulfilling
the properties of symbolic 
ultrametric can be defined and 
thus, if there is tree representation $(T,t)$ of $G$. 
Of course, this is possible only if 
$\delta$ restricted to $E$, resp., $E^c$ is a symbolic ultrametric, 
while it also a symbolic ultrametric 
on the complete graph $K_{|V|}=(V,E\cup E^c)$. 
The next theorem answers the latter question and, in addition, 
provides a new characterization of cographs.

\begin{thm}
	Let $G=(V,E)$ be an arbitrary 
	(possibly disconnected) graph, $W= \{(x,y)\in V\times V \mid [x,y]\in E\}$
	and $W^c= \{(x,y)\in V\times V \mid [x,y]\not\in E\}$.
	There is a symbolic ultrametric $\delta:V\times V\to \Mo$
	s.t.\ $\de(W)\cap \de(W^c)=\emptyset$ 
	if and only if 
	$G$ is a cograph.
\end{thm}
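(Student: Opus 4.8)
The plan is to prove the two implications of the biconditional separately, reading it as [such a $\delta$ exists] $\iff$ [$G$ is a cograph]. I would dispose of the $(\Leftarrow)$ direction (cograph $\Rightarrow$ existence) first, as it is routine, and then concentrate on $(\Rightarrow)$ (existence $\Rightarrow$ cograph), which carries the real content.

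For the $(\Leftarrow)$ direction I would start from the cotree $T(G)$ guaranteed by Lerchs' theorem, with its canonical labeling $t\colon V^0\to\{0,1\}$ satisfying $t(\lca(x,y))=1$ exactly when $[x,y]\in E$. Extending $t$ by $t(x)=\varnothing$ on the leaves, I would simply set $\delta:=d_{(T(G),t)}$, i.e.\ $\delta(x,y)=t(\lca_{T(G)}(x,y))$, and take $M=\{0,1\}$. Then $\delta(x,y)=1$ for $[x,y]\in E$ and $\delta(x,y)=0$ for $[x,y]\notin E$, so $\de(W)\subseteq\{1\}$ and $\de(W^c)\subseteq\{0\}$ are disjoint. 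Since $(T(G),t)$ is by construction a symbolic representation of $\delta$, Theorem~\ref{bd} immediately yields that $\delta$ is a symbolic ultrametric, finishing this direction.

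For the $(\Rightarrow)$ direction I would argue by contraposition and exploit the $P_4$-free characterization of cographs. Suppose $\delta$ is a symbolic ultrametric with $\de(W)\cap\de(W^c)=\emptyset$, and write $M_E:=\de(W)$ for the colors occurring on edges and $M_N:=\de(W^c)$ for those occurring on non-edges; by hypothesis $M_E\cap M_N=\emptyset$. Assume, for contradiction, that $G$ contains an induced path $a\!-\!b\!-\!c\!-\!d$ (edges $ab,bc,cd$ and non-edges $ac,bd,ad$). The idea is to feed the four triples of $\{a,b,c,d\}$ into property (U2) and then collide the outcome with (U3).

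The crux is turning (U2) into monochromaticity. In the triple $\{a,b,c\}$ the two edge-colors $\delta(a,b),\delta(b,c)\in M_E$ and the non-edge color $\delta(a,c)\in M_N$ lie in disjoint classes, so $\delta(a,c)$ differs from both; since (U2) permits at most two distinct values among the three, this forces $\delta(a,b)=\delta(b,c)$. The same argument on $\{b,c,d\}$ forces $\delta(b,c)=\delta(c,d)$, so all three path edges share a single color $\alpha\in M_E$; symmetrically, the triples $\{a,c,d\}$ and $\{a,b,d\}$ force the three non-path pairs to share a single color $\beta\in M_N$. Because $M_E\cap M_N=\emptyset$ we have $\alpha\neq\beta$, and setting $(x,y,u,v)=(a,b,c,d)$ reproduces exactly the configuration $\delta(x,y)=\delta(y,u)=\delta(u,v)\neq\delta(y,v)=\delta(x,v)=\delta(x,u)$ forbidden by (U3), the desired contradiction; hence $G$ is $P_4$-free, i.e.\ a cograph. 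I expect this monochromaticity step to be the only delicate point, and it is precisely where the disjointness $\de(W)\cap\de(W^c)=\emptyset$ is indispensable: without it, (U2) alone would not pin the edge-colors (or the non-edge colors) to a common value, and the clean reduction to the forbidden (U3)-pattern would break down.
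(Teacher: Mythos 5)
Your proof is correct, and it follows the same overall skeleton as the paper's (a $0/1$ edge/non-edge coloring for one direction, an induced $P_4$ contradiction for the other), but the details of both halves are organized differently enough to be worth noting. For the easy direction the paper defines $\delta$ directly by $\delta(x,y)=1$ iff $[x,y]\in E$ and verifies conditions (U2$'$) and (U3$'$) of Theorem~\ref{thm:cograph} (with (U2$'$) trivial since $|M|=2$ and (U3$'$) holding because $G$ and $G^c$ are cographs); you instead pull the same map back from the cotree and invoke Theorem~\ref{bd}, which is equally valid. For the hard direction the paper first uses (U3$'$) to rule out a monochromatic induced $P_4$, concluding that two \emph{adjacent} edges of the $P_4$ must receive different colors, and then uses disjointness of $\de(W)$ and $\de(W^c)$ to violate (U2$'$) on the corresponding triple; you run the implication the other way, using (U2) plus disjointness to force all three edges (and all three non-edges) of the $P_4$ to be monochromatic and then exhibiting exactly the forbidden (U3) configuration on $(a,b,c,d)$. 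Your version is arguably cleaner: it works directly from the axioms (U0)--(U3) rather than through Theorem~\ref{thm:cograph}, and it replaces the paper's somewhat informal case analysis (``for all such possible maps $\de$ subdividing this $P_4$ \dots'') with a deterministic chain of forced equalities over the four triples. Both arguments use the disjointness hypothesis in the same essential place, namely to guarantee that an edge color can never coincide with a non-edge color inside a triple.
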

\begin{proof}
	First assume that $G$ is a cograph. Set $\de(x,x)  = \varnothing$ for 
	all $x\in V$ and set $\de(x,y)=\de(y,x)=1$ if  
	$[x,y]\in E$ and, otherwise, to $0$. 
	Hence, condition $(U0)$ and $(U1)$ are fulfilled. 
	Moreover, by construction $|M|=2$ and thus, condition $(U2')$ is trivially
	fulfilled. Furthermore, since $G_1(\de)$ and its complement $G_0(\de)$
	are cographs, $(U3')$ is satisfied. Theorem \ref{thm:cograph} implies that
	$\de$ is a symbolic ultrametric. 
	
	Now, let $\delta:V\times V\to \Mo$ be a symbolic ultrametric with 
	$\de(W)\cap \de(W^c)=\emptyset$. 
	Assume for contradiction that $G$ is not a cograph. Then $G$ contains 
	an induced path $P_4 = a-b-c-d$. Therefore, at least one edge $e$
	of this path $P_4$
	must obtain a color $\de(e)$ different from the other two edges
	contained in this $P_4$, 
	as otherwise $G_{\de(e)}(\de)$ is not a cograph and 
	thus, $\de$ is not a symbolic ultrametric (Theorem \ref{thm:cograph}).
	For all such possible maps $\de$ ``subdividing'' this 
	$P_4$ we always obtain that two edges of at least one of 
	the underlying paths $P_3 = a-b-c$ or $b-c-d$ 
	must have different colors. W.l.o.g. assume that
	$\de(a,b)\neq \de(b,c)$. Since $[a,c]\not\in E$
	and $\de(W)\cap \de(W^c)=\emptyset$ we can conclude that
	$\de(a,c)\neq \de(a,b)$ and 	$\de(a,c)\neq \de(b,c)$. 
	But then condition $(U2')$ cannot be satisfied, and 
    Theorem \ref{thm:cograph} implies that $\de$ is not a
	symbolic ultrametric. \hfill \qed
\end{proof}

The latter result implies, that there is no hope for finding a map $\delta$
for a graph $G$, that assigns symbols or colors to edges, resp., non-edges such that
for $\delta$ (and hence, for $G$) there is a symbolic representation $(T,t)$, unless $G$ is
already a cograph. In other words, every symbolic representation 
  $(T,t)$ for an arbitrary graph $G$ (which only exists
if $G$ is a cograph) is a cotree. 
However, this result does not come as a big surprise, as a
cograph $G$ is characterized by the existence of a unique (up to isomorphism)
cotree $(T,t)$ representing the topology of $G$. 
The (decision version of
the) problem to edit a given graph $G$ into a cograph $G'$, 
and thus, to find the closest graph $G'$ that has 
a symbolic representation, is NP-complete \cite{Liu:11,Liu:12}.
In this contribution, we are interested in 
the following problem: 
\emph{What is the minimum number of cotrees
that are needed to represent the topology of $G$ in an unambiguous way?}

\section{Cotree Representation and Cograph $\boldsymbol{k}$-Decomposition}




Recollect, a graph $G = (V,E)$ is represented by a set of cotrees $\mathbb T = \{T_1,\dots,T_k\}$, 
if and only if for each edge $[x,y]\in E$ there is a tree $T_i\in \mathbb T$
with $t(\lca_{T_i}(x,y)) =1$. Note, by definition, each cotree $T_i$ determines a subset 
$E_i = \{[x,y]\in E \mid t(\lca_{T_i}(x,y))=1\}$ of $E$. Hence, the subgraph 
$(V,E_i)$ must be a cograph. Therefore, in order to find the minimum number of cotrees
representing a graph $G$, we can equivalently ask for a decomposition 
$\Pi=\{E_1,\dots,E_k\}$ of $E$ so that each subgraph $(V,E_i)$ is a cograph, 
where $k$ is the least integer among all cograph decompositions of $G$. 
Thus, we are dealing with the following two equivalent problems.

\begin{figure}[tbp]
  \centering
  \includegraphics[bb= 182 515 409 694, scale=0.6]{./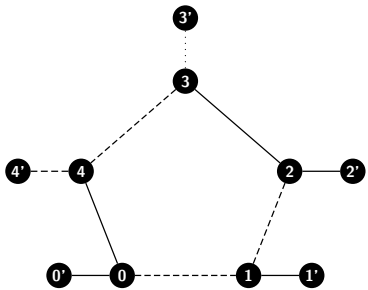} 
  \caption{Full enumeration of all possibilities (which we leaf to the reader), 
			shows that the depicted graph has no cograph 2-decomposition. However, it has 
			a cograph 3-decomposition that is
			highlighted by dashed-lined, dotted and bold edges.}
  \label{fig:3col}
\end{figure}

\begin{problem1}\textsc{Cotree $k$-Representation}

\vspace{0.3em}
\begin{tabular}{ll}
	\emph{Input:}& Given a graph $G=(V,E)$ and an integer $k$ . \\
	\emph{Question:}& Can $G$ be represented by $k$ cotrees?
\end{tabular}
\end{problem1}

\begin{problem1}\textsc{Cograph $k$-Decomposition}

\vspace{0.3em}	
\begin{tabular}{ll}
  \emph{Input:}& Given a graph $G=(V,E)$ and an integer $k$. \\
\emph{Question:}& Is there a cograph $k$-decomposition of $G$?
\end{tabular}
\end{problem1}

Clearly, any cograph has an optimal $1$-decomposition, while
for cycles of length $>4$ or paths $P_4$ there is always an 
optimal cograph 2-decomposition. 
However, there are examples of graphs that do not have
a 2-decomposition, see Figure \ref{fig:3col}.
To derive an upper bound for the integer $k$ s.t.\ there 
is a cograph $k$-decomposition for arbitrary graphs, 
the next theorem is given. 

\begin{thm}
For every graph $G$ with maximum degree $\Delta$
there is a cograph $k$-decomposition with 
$1\leq k\leq \Delta+1$ that can 
be computed in $O(|V||E| + \Delta(|V|+|E|))$ time.
Hence, any graph can
be represented by at most $\Delta+1$ cotrees.
\end{thm}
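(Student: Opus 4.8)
The plan is to reduce the construction to a proper edge coloring of $G$. First I would observe that any \emph{matching}---a graph in which every vertex has degree at most one---is a cograph: such a graph contains no induced path $P_3$ on three vertices, and in particular no induced $P_4$, so it is trivially $P_4$-free. Consequently, if we partition $E$ into $k$ matchings, the corresponding subgraphs $(V,E_i)$ are all cographs, and $\{E_1,\dots,E_k\}$ is a cograph $k$-partition, hence in particular a cograph $k$-decomposition.

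Next I would invoke Vizing's theorem, which guarantees that the edges of any graph $G$ can be properly colored with at most $\Delta+1$ colors, i.e.\ that the edge chromatic number satisfies $\chi'(G)\le \Delta+1$. A proper edge coloring is precisely a partition of $E$ into color classes, each of which is a matching, and there are at most $\Delta+1$ of them. Combining this with the observation above immediately yields a cograph $k$-partition with $1\le k\le \Delta+1$, the lower bound being trivial. Since each cograph $(V,E_i)$ admits a cotree $T_i$ with leaf set $V$ and labeling $t$ satisfying $t(\lca_{T_i}(x,y))=1$ exactly when $[x,y]\in E_i$, and since the classes cover $E$, the family $\{T_1,\dots,T_k\}$ represents $G$ in the sense defined earlier; hence $G$ is represented by at most $\Delta+1$ cotrees.

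For the running time I would employ a constructive $(\Delta+1)$-edge-coloring procedure, such as the Misra--Gries algorithm, which attains the bound in $O(|V||E|)$ time. Once the color classes are available, the cotree of each matching $(V,E_i)$ is assembled directly: its root is a union node (labeled $0$) whose children are the isolated vertices of $(V,E_i)$ together with one join node (labeled $1$) carrying the two endpoints of each edge of $E_i$. Each such cotree is built in $O(|V|+|E_i|)$ time, and summing over the at most $\Delta+1$ classes contributes $O(\Delta(|V|+|E|))$, giving the claimed total of $O(|V||E|+\Delta(|V|+|E|))$.

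The only nontrivial ingredient, and hence the main obstacle, is securing the value $\Delta+1$ together with its constructive and efficient realization: this rests entirely on Vizing's theorem and on the existence of a polynomial-time algorithm meeting the edge chromatic bound within the stated complexity. By contrast, the combinatorial claim that matchings are cographs is immediate, and the cotree assembly is routine, so the crux of the argument is simply citing and correctly applying the edge-coloring result.
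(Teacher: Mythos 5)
Your proposal is correct and follows essentially the same route as the paper: a proper edge coloring via Vizing's theorem yields color classes that are matchings (hence trivially $P_4$-free), the Misra--Gries algorithm makes this constructive in $O(|V||E|)$ time, and building the at most $\Delta+1$ cotrees contributes the remaining $O(\Delta(|V|+|E|))$ term. The only difference is cosmetic: you describe the cotree of each matching explicitly, whereas the paper cites the general linear-time cotree construction.
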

\begin{proof}
Consider a proper edge-colorings $\varphi:E\to \{1,\dots,k\}$ of $G$, 
i.e., an edge coloring such that no two incident edges obtain the same color.
Any proper edge-coloring using $k$ colors 
yields a cograph $k$-partition $\Pi= \{E_1,\dots,E_k\}$ where
$E_i=\{e\in E\mid \varphi(e)=i\}$,
because any connected component in $G_i =(V,E_i)$ 
is an edge and thus, no $P_4$'s are contained in $G_i$.
Vizing's Theorem \cite{V:64} implies that for each graph 
there is a proper edge-coloring using $k$ colors with $\Delta\leq k\leq \Delta+1$. 

An proper edge-coloring using at most $\Delta+1$ colors can be computed 
with the Misra-Gries-algorithm in $O(|V||E|)$ time \cite{MG:92}. 
Since the (at most $\Delta+1$) respective cotrees can be constructed in linear-time $O(|V|+|E|)$ \cite{corneil1985linear}, we
derive the runtime $O(|V||E| + \Delta(|V|+|E|))$. \hfill \qed
\end{proof}



Obviously, any optimal $k$-decomposition must also be a coarsest
$k$-decomposition, while the converse is in general not true, see
Fig.\ref{fig:pd}.
The partition $\Pi= \{E_1,\dots,E_k\}$ 
obtained from a proper edge-coloring is usually not a coarsest one, 
as possibly $(V,E_J)$ is a cograph, where
$E_J=\cup_{i\in J} E_i$ and  $J\subseteq \{1,\dots,l\}$.
A graph having an optimal cograph $\Delta$-decomposition
is shown in Fig.\ \ref{fig:3col}. 
Thus, the 
derived bound $\Delta+1$ is almost sharp.
Nevertheless, we assume that this bound can be sharpened: 

\begin{conjecture}
For every graph $G$ with maximum degree $\Delta$
there is a cograph $\Delta$-decomposition.
	\label{conj:k-partDelta}
\end{conjecture}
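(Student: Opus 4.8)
The plan is to start from the $\Delta+1$ bound already established and to ``save'' one color. Since any graph with $\chi'(G)=\Delta$ admits a proper $\Delta$-edge-coloring, and every color class of a proper edge-coloring is a matching and hence a cograph, the conjecture is immediate in the Vizing class-1 case. Thus the whole difficulty is concentrated in the class-2 case $\chi'(G)=\Delta+1$, and I would fix a proper $(\Delta+1)$-edge-coloring with matchings $M_1,\dots,M_{\Delta+1}$ and try to redistribute the edges of $M_{\Delta+1}$ among the remaining $\Delta$ classes. Here I would exploit that a \emph{decomposition} (unlike a partition) permits the $E_i$ to overlap, so an edge may be copied into several classes if that helps to keep each $(V,E_i)$ a cograph.

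The second ingredient is a clean local criterion for adding a single edge to a cograph: if $H$ is a cograph and $e=[v,u]$ is an edge whose endpoint $v$ is isolated in $H$, then $H+e$ is again a cograph if and only if $u$ is not the endpoint of any induced $P_3$ in $H$. Indeed, any new induced $P_4$ would have to use the degree-one vertex $v$ as an endpoint, forcing an induced $P_3$ of the form $u-a-b$. This turns the redistribution into an assignment problem through an induction on $|V|$: delete a vertex $v$, obtain by induction a cograph $\Delta$-decomposition $\{E_1,\dots,E_\Delta\}$ of $G-v$, view $v$ as an isolated vertex added to each class, and then place each of the $\le\Delta$ edges incident with $v$ into a \emph{distinct} class, so that $v$ has degree at most one everywhere. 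By the local criterion the edge $[v,u]$ may be placed in class $i$ exactly when $u$ is not an induced-$P_3$-endpoint in $(V,E_i)$, and a system of distinct representatives for these admissible placements, i.e.\ Hall's condition, would complete the step.

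The main obstacle is precisely verifying this Hall condition, and I expect it to be the crux. An edge $[v,u]$ is \emph{forbidden} in a class whenever $u$ already sits at the end of an induced $P_3$ there, and a priori a single troublesome neighbor $u$ could be blocked in almost all $\Delta$ classes at once, so the naive count need not satisfy Hall. To control this I would strengthen the inductive hypothesis, asking not merely for \emph{some} cograph $\Delta$-decomposition of $G-v$ but for one in which every vertex is ``free'' (not an induced-$P_3$-endpoint) in a number of classes that is large for low-degree vertices; maintaining such an invariant under the insertion step, possibly aided by $P_4$-destroying recolorings analogous to Kempe chains and by the freedom to duplicate edges, is the delicate part. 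Finding an invariant simultaneously strong enough to force Hall's condition and weak enough to survive insertion is, in my view, exactly the gap that keeps the statement at the level of a conjecture. A complementary attack is to merge pairs of Vizing matchings directly: each $M_i\cup M_j$ is a disjoint union of paths and cycles, and one could try to re-route only the long paths and large cycles (the single non-cograph component type) by shifting edges into the overlap, but the bookkeeping needed to collapse all $\Delta+1$ matchings to $\Delta$ cographs seems to run into the same difficulty.
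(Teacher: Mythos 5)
The statement you are addressing is not proved in the paper at all: it appears there as Conjecture~\ref{conj:k-partDelta} and is left open. The only thing the paper establishes in this direction is the weaker $\Delta+1$ bound via Vizing's theorem (every colour class of a proper edge-colouring is a matching, hence a cograph), which is exactly your starting point. So there is no ``paper proof'' to match your argument against, and the question is only whether your proposal closes the gap. It does not, and to your credit you say so yourself.

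Concretely: your reduction to the class-2 case is correct, and your local criterion is correct as stated --- if $v$ is isolated in a cograph $H$, then $H+[v,u]$ is a cograph if and only if $u$ is not the endpoint of an induced $P_3$ in $H$, since any new induced $P_4$ must have the degree-one vertex $v$ as an endpoint. But the entire weight of the proof then rests on the Hall condition for the bipartite ``edge-at-$v$ versus admissible class'' assignment, and you give no argument for it. The worry you raise is real: a single neighbour $u$ of $v$ can be an induced-$P_3$-endpoint in all $\Delta$ classes simultaneously (nothing in the bare inductive hypothesis prevents $u$ from having degree $\ge 1$ and a non-neighbour at distance two in every class), so a set consisting of one edge can already violate Hall. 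The proposed fix --- a strengthened invariant guaranteeing each vertex is ``free'' in many classes --- is not formulated, let alone shown to be maintainable through the insertion step, and your alternative attack via merging pairs of Vizing matchings runs into the same unresolved bookkeeping. So the proposal is a reasonable programme, not a proof; the statement remains, as in the paper, a conjecture. If you want partial results that are actually provable along these lines, the class-1 observation ($\chi'(G)=\Delta$ implies a cograph $\Delta$-partition by matchings) is worth recording explicitly, since it already covers almost all graphs in the Erd\H{o}s--Wilson sense, but it is not what the conjecture asserts.
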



\begin{figure}[t]
  \centering
  \includegraphics[bb= 6 416 435 651, scale=0.6]{./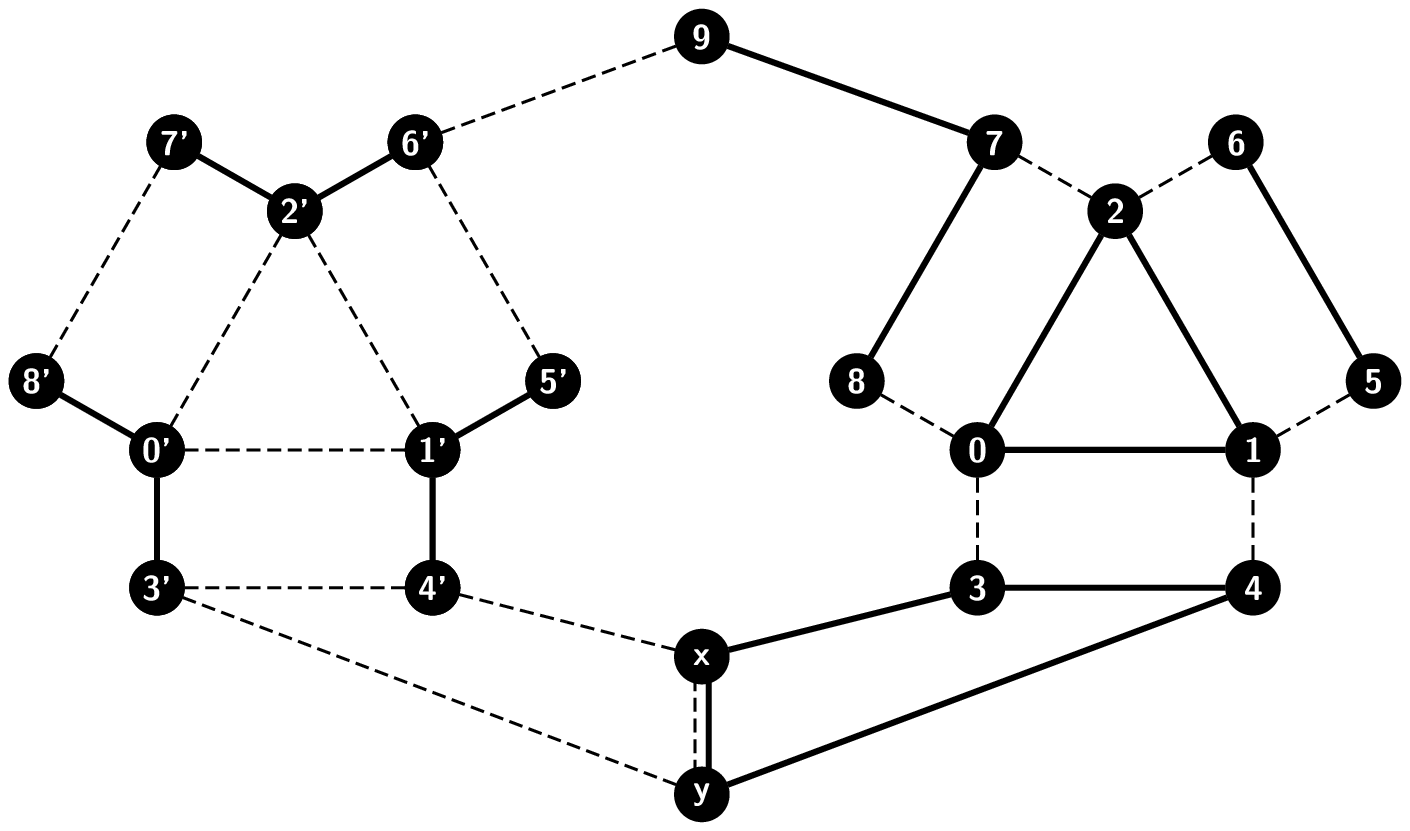}\\[0.3cm]  \centering
  \includegraphics[bb= 72 497 561 609, scale=0.7]{./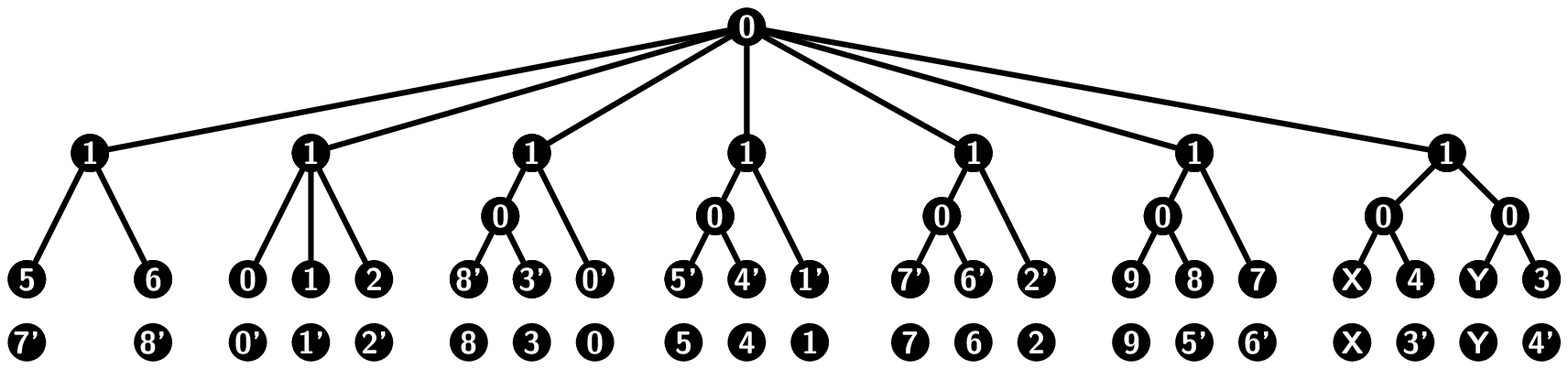}
  \caption{The shown (non-co)graph $G$  has a 2-decomposition $\Pi=\{E_1,E_2\}$.
						Edges in the different elements $E_1$ and $E_2$ are 
            highlighted by dashed and solid edges, respectively. Thus, 
            two cotrees, shown in the lower part of this picture, are sufficient to represent the structure of $G$.
						The two cotrees are isomorphic, and thus, differ only in the arrangement of their leaf sets. For this
		  	   reason, we only depicted one cotree with two different leaf sets. 
					  Note,  $G$ has no 2-partition, but a coarsest 3-partition. The latter can easily be
            verified by application of the construction in Lemma \ref{lem:col-lit}. 
}
  \label{fig:pd}
\end{figure}

However, there are examples of non-cographs containing many induced $P_4$'s that have 
a cograph $k$-decomposition with $k\ll\Delta+1$, which implies that
any optimal $k$-decomposition of those graphs 
will have significantly less elements than $\Delta+1$, see the following
examples. 

\begin{example}
	Consider the graph $G = (V,E)$ with vertex set $V=\{1,\dots,k\}\cup\{a,b\}$
	and $E =\{[i,j]\mid i,j\in \{1,\dots,k\}, i\neq j\}\cup\{[k,a],[a,b]\}$.
	The graph $G$ is not a cograph, since there are induced $P_4$'s of the form
	$i-k-a-b$, $i\in \{1,\dots,k-1\}$. On the other hand, the subgraph
	$H=(V, E\setminus \{[k,a]\})$ has two connected components, 
	one is isomorphic to the complete graph $K_k$ on $k$ vertices and the 	
	other to the complete graph $K_2$. Hence, $H$ is a cograph. 
	Therefore, $G$ has a cograph 2-partition $\{E\setminus \{[k,a]\}, \{[k,a]\}\}$, independent from $k$
	and thus, independent from the maximum degree $\Delta = k$. 
\end{example}

\begin{example}
 Consider the 2n-dimensional hypercube $Q_{2n}=(V,E)$ with maximum degree $2n$. 
 We will show that this hypercube has a
 coarsest cograph $n$-partition $\Pi=\{E_1,\dots,E_n\}$, which implies that for any optimal
 cograph $k$-decomposition of $Q_{2n}$ we have $k\leq \Delta/2$. 

 We construct now a cograph $n$-partition of $Q_{2n}$. Note, $Q_{2n} =
\Box_{i=1}^{2n} K_2 = \Box_{i=1}^n (K_2\Box K_2) = \Box_{i=1}^n Q_2$. In
order to avoid ambiguity, we write $\Box_{i=1}^n Q_2$ as $\Box_{i=1}^n
H_i$, $H_i\simeq Q_2$ 
and assume that $Q_2$ has edges $[0,1]$, $[1,2]$,
$[2,3]$, $[3,0]$. 
The cograph $n$-partition of $Q_{2n}$ is defined as $\Pi
= \{E_1,\dots, E_n\}$, where $E_i=\cup_{v\in V} E(H_i^v)$. In other words,
the edge set of all $H_i$-layers in $Q_{2n}$  
 constitute a single class $E_i$ in
the partition for each $i$. 
Therefore, the subgraph
$G=(V,E_i)$ consists of $n$ connected components, each component is isomorphic
to the square $Q_2$. Hence, $G_i=(V,E_i)$ is a cograph. 

Assume for contradiction that $\Pi = \{E_1,\dots, E_n\}$ is not a coarsest partition.
Then there are distinct classes $E_i$, $i\in I\subseteq \{1,\dots,n\}$ such
that $G_I=(V,\cup_{i\in I}E_i)$ is a cograph. W.l.o.g.\ assume that $1,2\in
I$ and let $v=(0,\dots,0)\in V$. Then, the subgraph $H_1^v\cup H_2^v
\subseteq Q_{2n}$ contains a path $P_4$ with edges $[x,v]\in E(H_1^v)$
and $[v,a], [a,b]\in E(H_2^v)$, where x=(1,0,\dots,0), a=(0,1,0\dots,0)
and $b=(0,2,0\dots,0)$. 
	By definition of the Cartesian product, there are no edges connecting
$x$ with $a$ or $b$ or $v$ with $b$ in $Q_{2n}$ and thus, this path $P_4$ is induced.
As this holds for all subgraphs $H_i^v\cup H_j^v$ ($i,j\in I$ distinct) and
thus, in particular for the graph $G_I$ we can conclude that classes of
$\Pi$ cannot be combined. Hence $\Pi$ is a coarsest cograph $n$-partition. 
\label{ex:hq}
\end{example}

Because of the results of computer-aided search for $n-1$-partitions and
decompositions of hypercubes $Q_{2n}$ 
we are led to the following conjecture:

\begin{conjecture}
	Let $k\in \mathbb N$ and $k>1$.
	Then the $2k$-cube has no cograph $k-1$-decomposition, i.e., 
	the proposed $k$-partition of the hypercube $Q_{2k}$  in Example \ref{ex:hq}  is also optimal.
	\label{conj:k-part2}
\end{conjecture}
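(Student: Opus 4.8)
The plan is to reduce the statement to a single structural fact about cographs inside hypercubes and then finish by a short edge count. Write $Q_{2k}=(V,E)$, so that $|V|=2^{2k}$ and $|E|=2k\cdot 2^{2k-1}=k\,|V|$. A cograph $(k-1)$-decomposition would be a family $E_1,\dots,E_{k-1}$ with $\bigcup_i E_i=E$ and every $(V,E_i)$ a cograph. Hence it suffices to prove that any cograph subgraph $(V,E_i)\subseteq Q_{2k}$ satisfies $|E_i|\le |V|$: then $|E|=|\bigcup_i E_i|\le\sum_i|E_i|\le (k-1)|V|<k\,|V|=|E|$, a contradiction. Note this uses only $|\bigcup_i E_i|\le\sum_i|E_i|$, so it rules out decompositions and not merely partitions. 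Since Example \ref{ex:hq} already supplies a cograph $k$-partition, the same bound simultaneously shows that partition is optimal.

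The key step, and the one I expect to carry the real weight, is the following Structure Lemma: a spanning subgraph $H=(V,E')$ of any hypercube is a cograph if and only if each of its connected components is a star $K_{1,d}$ or a $4$-cycle $C_4$. The ``if'' direction is immediate, since stars and $C_4$ are cographs and a disjoint union of cographs is a cograph. For ``only if'' I would use that $H$ must be $P_4$-free and split on the maximum degree of a component $C$. If some $v\in C$ has degree $\ge 3$, I claim each neighbour of $v$ is a leaf: were a neighbour $a=v^{j}$ (flip coordinate $j$) adjacent in $H$ to a further vertex $w=v^{jm}$ with $m\ne j$, then picking an edge of $v$ in a direction $p\notin\{j,m\}$ — possible because $v$ has edges in at least three distinct directions — yields the path $v^{p}-v-a-w$ whose three non-consecutive pairs lie at hypercube-distance $2$, $2$, $3$ and are therefore non-adjacent; this is an induced $P_4$ in $H$, a contradiction. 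Thus $C$ is exactly a star. If instead every vertex of $C$ has degree $\le 2$, then $C$ is an induced path or cycle; induced paths on $\ge 4$ vertices and induced cycles on $\ge 5$ vertices contain an induced $P_4$, and odd cycles are excluded because hypercubes are bipartite, so $C$ is a single edge, a $P_3$ (again a star), or a $C_4$.

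Granting the Structure Lemma, the required bound is routine. The components of $(V,E_i)$ are vertex-disjoint; a star on $n_t$ vertices has $n_t-1$ edges and a $C_4$ has $4=n_t$ edges, so
\[
|E_i|=\sum_{\text{stars}}(n_t-1)+\sum_{C_4\text{'s}}n_t\le\sum_t n_t\le|V|.
\]
This is exactly the inequality needed in the first paragraph, which then forces any cograph decomposition of $Q_{2k}$ to use at least $k$ classes, establishing the conjecture together with the optimality of the partition in Example \ref{ex:hq}. The only points demanding care are the verifications hidden in the Structure Lemma — that the four displayed vertices really are pairwise non-adjacent at the claimed distances, and that a component of maximum degree two is genuinely an induced path or cycle — but these are short distance computations in the hypercube rather than a genuine difficulty.
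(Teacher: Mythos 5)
The statement you are asked to prove is left open in the paper: it appears only as Conjecture \ref{conj:k-part2}, supported by computer search, and the paper's Example \ref{ex:hq} merely shows that the constructed $k$-partition of $Q_{2k}$ is \emph{coarsest} (no classes can be merged), which does not imply optimality. Your argument therefore does not parallel anything in the paper -- and, as far as I can check, it is correct and actually settles the conjecture. The Structure Lemma holds: a connected $P_4$-free graph on at least two vertices is a join of two cographs, hence has a spanning complete bipartite subgraph $K_{a,b}$; triangle-freeness of the hypercube forces both sides to be independent sets, and since two vertices of a hypercube have at most two common neighbours, $K_{2,3}$ cannot occur as a subgraph, leaving exactly the stars $K_{1,d}$ and $C_4$. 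Your direct degree-based derivation of the same fact is also sound -- the distances $2,2,3$ for the non-consecutive pairs of $v^{p}-v-a-w$ are computed correctly, and a max-degree-$2$ component of $H$ is an induced path or an (even) induced cycle of $H$, so only $K_1$, $K_2$, $P_3$ and $C_4$ survive. The resulting bound $|E_i|\le |V|$ for every cograph subgraph, combined with $|E(Q_{2k})|=k\,2^{2k}=k|V|$ and subadditivity of the union, rules out any cograph $(k-1)$-decomposition, not merely $(k-1)$-partitions. What your approach buys beyond the paper is substantial: it converts a structural question into a one-line counting argument, it proves Conjecture \ref{conj:k-part2} in full, and -- as the paper itself observes -- it thereby also establishes Conjecture \ref{conj:k-part1}, that for every $k$ there is a graph whose optimal cograph decomposition has exactly $k$ classes. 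The only presentational caveat is that your two proofs of the Structure Lemma (the join/complete-bipartite route sketched here and your degree case analysis) should be merged into one clean statement, and you should note explicitly that a component of $H$ is an induced subgraph of $H$, which is what makes the chordlessness of the paths and cycles automatic.
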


The proof of the latter hypothesis would immediately verify the
next conjecture. 

\begin{conjecture}
	For every $k\in \mathbb N$ there is a graph that has
	an \emph{optimal} cograph $k$-decomposition. 
	\label{conj:k-part1}
\end{conjecture}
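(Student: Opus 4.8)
The plan is to prove the following sharper statement, which immediately gives the claim: writing $\kappa(G)$ for the least $k$ admitting a cograph $k$-decomposition of $G$ (the size of an optimal decomposition), the quantity $\kappa$ attains \emph{every} positive integer as $G$ ranges over all finite graphs. I would derive this from a \emph{discrete intermediate value principle} together with an \emph{unboundedness} result.

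First (continuity), I would show $|\kappa(G)-\kappa(G+e)|\le 1$. The bound $\kappa(G+e)\le\kappa(G)+1$ is trivial, since one appends $e$ as a singleton class. For the reverse bound $\kappa(H-e)\le\kappa(H)+1$ with $e=[u,v]$, the naive idea of deleting $e$ from the class containing it fails, because a cograph minus one edge need not be a cograph; this is the step I expect to be the main local obstacle. I would bypass it as follows. Starting from an optimal decomposition $\{E_1,\dots,E_t\}$ of $H$, replace each class by $E_i^\ast:=E_i\setminus\partial_H(u)$ (delete from every class all edges incident to $u$) and add one new class $F_0:=\partial_{H-e}(u)$, the remaining edges at $u$. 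Each $(V,E_i^\ast)$ is a cograph, since isolating a vertex in a cograph preserves the cograph property; $F_0$ is a star, hence a cograph; and their union is exactly $(E(H)\setminus\partial_H(u))\cup(\partial_H(u)\setminus\{e\})=E(H)\setminus\{e\}=E(H-e)$. This yields $t+1$ cograph classes, establishing continuity, and the construction respects disjointness, so it applies verbatim to partitions.

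Second (unboundedness), I would restrict to graphs of girth at least $5$. For such $G$ every class $(V,E_i)\subseteq G$ is triangle-free and $C_4$-free; a connected triangle-free cograph is complete bipartite, and a $C_4$-free complete bipartite graph is a star, so each class is a star forest. Hence for girth $\ge 5$ the quantity $\kappa(G)$ equals the star arboricity of $G$. Since a star forest on $n$ vertices has at most $n-1$ edges, any such graph satisfies $\kappa(G)\ge |E|/(|V|-1)$; taking for each $d$ a $d$-regular graph of girth $\ge 5$ (these exist classically, e.g.\ incidence graphs of projective planes have girth $6$) gives $\kappa(G)\ge d/2$, so $\kappa$ is unbounded. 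I would regard producing an honest \emph{lower} bound as the main global difficulty—matching lower bounds are exactly what is left open for hypercubes in Conjecture \ref{conj:k-part2}—and the point of the girth restriction is precisely that it makes the lower bound transparent, replacing a hard extremal question by the elementary edge count for forests.

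Finally (assembly), fix $k$ and pick $G^\ast$ with $\kappa(G^\ast)\ge k$. List its edges $e_1,\dots,e_m$ and set $G^{(i)}=(V,\{e_1,\dots,e_i\})$, so that $\kappa(G^{(1)})=1$ and $\kappa(G^{(m)})=\kappa(G^\ast)\ge k$. By the continuity lemma consecutive values differ by at most one, so the integer sequence $\kappa(G^{(1)}),\dots,\kappa(G^{(m)})$ cannot skip a value and must equal $k$ at some index $i$. That $G^{(i)}$ has an optimal cograph $k$-decomposition, which proves the statement; since every step of the argument preserves disjointness, it simultaneously produces a graph with an optimal cograph $k$-\emph{partition}.
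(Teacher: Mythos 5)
The paper offers no proof of this statement: it is posed as an open conjecture, explicitly flagged as appearing difficult, and the intended route is via the (also unproved) hypercube lower bound of Conjecture \ref{conj:k-part2} or via pattern-avoiding words. Your argument is therefore a genuinely different route, and after checking each step I find it correct; it would settle the conjecture. Writing $\kappa(G)$ for the size of an optimal cograph decomposition, your two ingredients both hold up. For the intermediate-value step only the trivial inequality $\kappa(G+e)\le\kappa(G)+1$ is actually needed: taking the first index $i$ with $\kappa(G^{(i)})\ge k$, the bound $\kappa(G^{(i)})\le\kappa(G^{(i-1)})+1\le k$ forces equality, so your more delicate deletion lemma (isolating an endpoint and adding one star class), while correct, is dispensable. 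For unboundedness, the chain of implications is sound: each class $(V,E_i)$ is a spanning subgraph of a girth-$\ge 5$ graph and hence triangle- and $C_4$-free; a connected cograph on at least two vertices is a join, a triangle-free join has two independent sides and is thus complete bipartite, and $C_4$-freeness forces a star; so each class is a star forest with at most $|V|-1$ edges, giving $\kappa(G)\ge |E|/(|V|-1)>d/2$ for $d$-regular graphs of girth at least $5$, which exist for arbitrarily large $d$. What your approach buys is exactly what the authors identified as the obstacle: it sidesteps the need for an exact lower bound on any specific family (such as $Q_{2k}$) by combining a crude but sufficient lower bound on a girth-restricted family with a monotonicity argument that converts \emph{at least} $k$ into \emph{exactly} $k$ without ever computing $\kappa$ precisely for a nontrivial graph. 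Since every step preserves disjointness, the same argument yields a graph with an optimal cograph $k$-partition. One cosmetic caveat: for girth $\ge 5$ the quantity $\kappa(G)$ is the minimum number of star forests needed to \emph{cover} $E$, a covering rather than partitioning variant of star arboricity, but this does not affect the edge-count bound you use.
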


Proving the last conjecture appears to be difficult.
We wish to point out that there is  a close relationship
to the problem of finding pattern avoiding words, see e.g.\
\cite{Br:05,BM:08,P:08,pudwell2008enumeration,bilotta2013counting,bernini2007enumeration}:
Consider a graph $G=(V,E)$ and an ordered list $(e_1,\dots,e_m)$ of the edges $e_i\in E$. 
We can associate to this list  $(e_1,\dots,e_m)$ a word $w=(w_1,\dots,w_m)$. 
By way of example, assume that we want to find a valid cograph 2-decomposition $\{E_1,E_2\}$ of $G$
and that $G$ contains an induced $P_4$ consisting of the edges $e_i,e_j,e_k$. 
Hence, one has to avoid assignments of the edges $e_i,e_j,e_k$
to the single set $E_1$, resp., $E_2$. The latter is equivalent to
find a binary word $(w_1,\dots,w_m)$  such that 
$(w_i,w_j,w_k) \neq (X,X,X)$, $X\in\{0,1\}$ for each of
those induced $P_4$'s.
The latter can easily be generalized
to find pattern avoiding words over an alphabet $\{1,\dots,k\}$ to get 
a valid $k$-decomposition. 
However, to the authors knowledge, results concerning the counting
of $k$-ary words avoiding forbidden patterns and thus, verifying if there is any such word
(or equivalently a $k$-decomposition) are basically known for scenarios
like:
If  $(p_1,\dots p_l) \in \{1,\dots,k\}^l$ (often $l<3$), 
then  none of the words $w$ that contain
a subword $(w{_{i_1}},\dots,w{_{i_l}})=(p_1,\dots p_l)$ 
with $i_{j+1}=i_{j}+1$ (consecutive letter positions) or
$i_j<i_k$ whenever $j<k$ (order-isomorphic letter positions)  is allowed. 
However, such findings are to restrictive to our problem, since we are looking
for words, that have only on a few, but fixed positions of non-allowed patterns.
Nevertheless, we assume that results concerning the recognition of 
pattern avoiding words might offer an avenue to solve the latter conjectures.





\subsection{NP-completeness and NP-hardness Results}

We are now in the position  to prove the NP-completeness of
\textsc{Cotree 2-Representation} and \textsc{Cotree 2-Decomposition}. 
These results allow to show that the problem of determining whether
there is cograph 2-partition is NP-complete, as well. 

We start with two lemmata concerning cograph 2-decompositions of the
graphs shown in Fig.\ \ref{fig:literal} and \ref{fig:clause}.







\begin{lemma}
For the literal and extended literal graph in 
Figure \ref{fig:literal} every cograph 2-decomposition
is a uniquely determined cograph 2-partition.

In particular, in every cograph 2-partition $\{E_1,E_2\}$
of the extended literal graph,
the edges of the triangle $(0,1,2)$ must be entirely
contained in one $E_i$ and the pending edge 
$[6,9]$ must be in the same edge set $E_i$ as the edges of the 
of the triangle. Furthermore, the edges $[9,10]$ and $[9,11]$ must be contained
in $E_j$, $i\neq j$. 
\label{lem:col-lit}
\end{lemma}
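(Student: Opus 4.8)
The plan is to reduce everything to the $P_4$-free characterisation of cographs recalled earlier: a spanning subgraph $(V,E_i)$ is a cograph precisely when it contains no induced $P_4$. Thus a family $\{E_1,E_2\}$ with $E_1\cup E_2=E$ is a cograph $2$-decomposition iff neither $(V,E_1)$ nor $(V,E_2)$ has an induced $P_4$. The observation I would use repeatedly is that whenever four vertices induce a $P_4$ in the whole graph $G$, the three path-edges cannot all lie in one common class $E_i$: the three non-path pairs are non-edges of $G$ and hence absent from both classes, so three monochromatic path-edges already reproduce an induced $P_4$ inside $(V,E_i)$. This turns the statement into a finite constraint-propagation over the explicit edge set of the (extended) literal graph of Figure~\ref{fig:literal}.

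First I would settle the triangle. Writing its edges as $[0,1],[1,2],[0,2]$, I would locate, for each pair of triangle edges, an induced $P_4$ of $G$ that traverses exactly those two edges together with one further edge leaving the triangle (a path $w-p-q$ continued into the body of the literal graph, where $[p,q]$ is a triangle edge). Running through the three ways in which the triangle could fail to be monochromatic — one edge differing from the other two — each possibility places two equally coloured triangle edges on such a path and forces a monochromatic induced $P_4$ in that class. Hence the only surviving option is that $[0,1],[1,2],[0,2]$ all lie in a single class $E_i$, which proves the triangle claim and at the same time shows that no triangle edge may additionally be placed in $E_j$, since that would again complete a monochromatic $P_4$ in $(V,E_j)$.

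With the triangle fixed in $E_i$, I would propagate outward along the induced $P_4$'s of the literal graph: each edge adjacent to an already-determined pair of same-class edges on some induced $P_4$ is forced into the complementary class, and conversely, so that the colour of every edge is determined up to the single global exchange $E_1\leftrightarrow E_2$; as an unordered partition this yields the uniqueness assertion. In the extended literal graph the same propagation reaches the pendant structure: the edge $[6,9]$ is forced into $E_i$ (the triangle's class) by the chain of $P_4$'s linking $6$ back to the triangle, while the paths $10-9-6-x$ and $11-9-6-x$ — induced because the only neighbours of $9$ are $6,10,11$ and $10,11$ are pendant, with $x$ a neighbour of $6$ whose incident edge is already forced into $E_i$ — force $[9,10],[9,11]\notin E_i$, hence into $E_j$ with $j\neq i$.

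Finally I would upgrade ``decomposition'' to ``partition'' and confirm the forced colouring is genuinely valid. Disjointness follows from the tightness of the constraints: for every edge $e$ there is an induced $P_4$ of $G$ whose two other edges already lie in the class opposite to the one forced for $e$, so putting $e$ into that opposite class as well would complete a monochromatic induced $P_4$ there; hence no edge can belong to both $E_1$ and $E_2$. To see that the unique candidate is a cograph $2$-decomposition at all, I would check directly that neither class contains an induced $P_4$. Here lies the main obstacle: besides the $P_4$'s already induced in $G$, one must rule out induced $P_4$'s that appear inside a single class on four vertices inducing more than a path in $G$ (this arises when a $G$-edge among them is split off into the other class). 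Excluding these requires inspecting, for each relevant $4$-set, the extra $G$-edges present, and it is precisely this finite but delicate verification that the concrete gadget in Figure~\ref{fig:literal} is designed to make go through.
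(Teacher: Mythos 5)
Your central tool --- that the three edges of a $P_4$ \emph{induced in $G$} cannot all lie in one class $E_i$ --- is sound, but it cannot establish the monochromaticity of the triangle $(0,1,2)$, and that is where your argument has a genuine gap. No induced $P_4$ of $G$ contains two edges of a triangle: any path visiting $0,1,2$ consecutively has the third triangle edge as a chord among its four vertices, so those vertices do not induce a $P_4$ in $G$. Moreover, even granting such a path $w-p-q-r$ with $[p,q],[q,r]$ triangle edges in the same class, your inference does not follow: the third path edge $[w,p]$ may simply be placed in the other class, so ``two equally coloured triangle edges on such a path'' forces nothing by itself. The paper's actual argument is different in kind. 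It assumes $[0,1],[0,2]\in F_1$ and $[1,2]\notin F_1$ (hence $[1,2]\in F_2$, since a decomposition must cover every edge), observes that the paths $6-2-0-1$ and $2-0-1-5$ become induced $P_4$'s \emph{in the subgraph $(V,F_1)$} precisely because the chord $[1,2]$ is withheld from $F_1$, thereby forcing $[2,6],[1,5]\in F_2$, and then exhibits an edge, $[1,4]$, that can be placed in neither class (because of $2-0-1-4$ in $(V,F_1)$ and $6-2-1-4$ in $(V,F_2)$). This ``edge with nowhere to go'' step, which crucially exploits chords absent from one class together with the covering property, is missing from your sketch and cannot be recovered from your stated observation alone.

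The rest of your plan --- outward propagation along induced $P_4$'s, the treatment of $[6,9]$, $[9,10]$, $[9,11]$, and the disjointness argument --- matches the paper's strategy in spirit, and your closing caveat about verifying that the forced colouring is genuinely $P_4$-free in each class corresponds to the finite check the paper dismisses as ``easy to verify''. But as written, the first and decisive step of your constraint propagation rests on induced $P_4$'s of $G$ that do not exist and on an implication that is not valid, so the proof does not go through without being reworked along the paper's lines: one must track $P_4$'s that become induced in a single class $(V,F_i)$ once a chord is excluded from $F_i$, and combine this with the fact that every edge must land somewhere.
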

\begin{proof}
It is easy to verify that the given cograph 2-partition $\{E_1,E_2\}$ in Fig.
\ref{fig:literal} fulfills the conditions and 
is correct, since $G=(V,E_1)$ and $G=(V,E_2)$ do not contain
induced $P_4$'s and are, thus, cographs.
We have to show that it is also unique.

Assume that there is another cograph 2-decomposition $\{F_1,F_2\}$. 
Note, for any cograph 2-decomposition $\{F_1,F_2\}$ it must hold that 
two incident edges in the triangle $(0,1,2)$ are contained in 
one of the sets $F_1$ or $F_2$. 
W.l.o.g.\ assume that $[0,1], [0,2]\in F_1$.

Assume first that  $[1,2] \not\in F_1$. 
In this case, because of the paths $P_4 = 6-2-0-1$ and $P_4 = 2-0-1-5$
it must hold that 
$[2,6], [1,5]\not\in F_1$ and thus, $[2,6], [1,5]\in F_2$. 
However, in this case and due to the paths $P_4 = 6-2-1-4$ and $2-0-1-4$ 
the edge $[1,4]$ can neither be contained in $F_1$ nor in $F_2$, 
a contradiction.  Hence, $[1,2] \in F_1$.

Note, the square $S_{1256}$ induced by vertices $1,2,5,6$ cannot have all
edges in $F_1$, as otherwise the subgraph $(V,F_1)$ would contain
the induced $P_4 = 6-5-1-0$. 
Assume that  
$[1,5]\in F_1$. As not all edges $S_{1256}$ are contained
in $F_1$, at least one of the edges $[5,6]$ and $[2,6]$ 
must be contained in $F_2$. If only one of the edges $[5,6]$, resp., $[2,6]$
is contained in $F_2$, we immediately obtain the 
induced $P_4=6-2-1-5$, resp., $6-5-1-2$ in $(V,F_1)$ and therefore, 
both edges $[5,6]$ and $[2,6]$ must be contained in $F_2$. 
But then the edge $[2,7]$ can neither be contained in $F_1$ (due to the induced $P_4=5-1-2-7$)
nor in $F_2$ (due to the induced $P_4=5-6-2-7$), a contradiction. 
Hence, $[1,5]\not\in F_1$ and thus, $[1,5]\in F_2$ for any $2$-decomposition. 
By analogous arguments and due to symmetry, all edges $[0,3]$, $[0,8]$, $[1,4]$, $[2,6]$, $[2,7]$
are contained in $F_2$, but not in $F_1$.

Moreover, due to the induced $P_4= 7-2-6-5$ and since $[2,6], [2,7] \in F_2$, 
the edge $[5,6]$ must be in $F_1$ and not in $F_2$. By analogous arguments and due to symmetry, 
it holds that $[3,4],[7,8]\in F_1$ and  $[3,4],[7,8]\not\in F_2$.
Finally, none of the edges of the triangle $(0,1,2)$ can be contained 
in $F_2$, as otherwise, we obtain an induced $P_4$ in $(V,F_2)$. 
Taken together, any $2$-decomposition of the literal graph must be a 
partition and is unique. 

Consider now the extended literal graph in Figure \ref{fig:literal}. As this
graph contains the literal graph as induced subgraph, the unique $2$-partition
of the underlying literal graph
is determined as by the preceding construction. 
Due to the path $P_4 = 7-2-6-9$ with $[2,6], [2,7] \in F_2$ we can 
conclude that $[6,9]\not\in F_2$ and thus $[6,9]\in F_1$. 
Since there are induced paths $P_4 = 5-6-9-y$, $y=10,11$ 
with $[5,6],[6,9]\in F_1$ we obtain that $[9,10], [9,11]\not\in F_1$
and thus, $[9,10], [9,11]\in F_2$ for any
$2$-decomposition (which is in fact a $2$-partition) 
of the extended literal graph, as claimed.
\end{proof}

\begin{figure}[t]
  \centering
  \includegraphics[bb= 0 475 398 660, scale=0.5]{./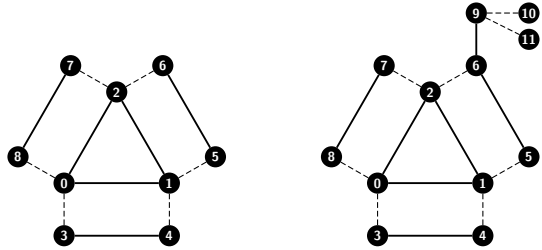}
  \caption{Left the \emph{literal graph} and right the \emph{extended} literal graph with 
			unique corresponding cograph 2-partition (indicated by dashed and bold-lined edges)
			is shown.}
  \label{fig:literal}
\end{figure}

\begin{figure}[t]
  \centering
  \includegraphics[bb=  4 322 554 611, scale=0.5]{./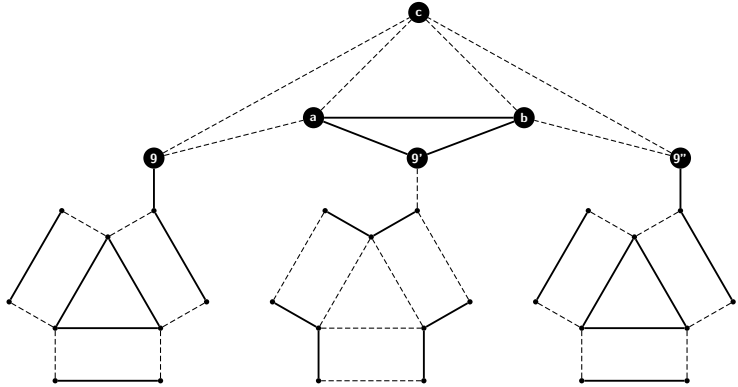}
  \caption{ Shown is a \emph{clause gadget} which consists of a triangle $(a,b,c)$ and
			 three extended literal graphs (as shown in Fig.\ \ref{fig:literal}) with edges attached to $(a,b,c)$. 
			A corresponding cograph 2-partition is indicated by dashed and bold-line edges. 
}
  \label{fig:clause}
\end{figure}

\begin{lemma}
	Given the clause gadget in Fig.\ \ref{fig:clause}. 

	For any cograph 2-decomposition,  
  all edges of exactly two of the triangles in the underlying
	three extended literal graphs must be contained in one $E_i$
	and not in $E_j$,
	while the edges of the triangle of one extended literal graph must be in  $E_j$
	and not in $E_i$, $i\neq j$. 

	Furthermore, 
	for each cograph 2-decomposition exactly two of the edges $e,e'$
	of the triangle $(a,b,c)$ must be in one $E_i$ while the
	other edge $f$ is in $E_j$ but not in $E_i$, $j\neq i$. 
	The cograph 2-decomposition can be chosen so that 
	in addition 
	$e,e'\not \in E_j$, 
	resulting in a cograph 2-partition of the 
	clause gadget.
	\label{lem:col-clause}
\end{lemma}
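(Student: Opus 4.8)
The plan is to leverage Lemma~\ref{lem:col-lit} and then analyze the handful of induced paths $P_4$ created when the three extended literal graphs are attached to the central triangle $(a,b,c)$. First I would fix notation. By Lemma~\ref{lem:col-lit}, each of the three extended literal graphs occurs as an induced subgraph of the clause gadget, so its cograph 2-decomposition is forced to be the unique 2-partition described there. Hence, for any cograph 2-decomposition $\{E_1,E_2\}$ of the whole gadget, each extended literal graph carries a well-defined binary ``state'': which of $E_1,E_2$ contains the edges of its inner triangle $(0,1,2)$ (equivalently, the pending edge $[6,9]$), with the two edges $[9,10],[9,11]$ lying in the opposite class. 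This state is precisely the truth value the gadget is designed to encode, and the whole lemma amounts to showing that these three states obey a Not-All-Equal constraint.

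The heart of the argument is that the three literal triangles cannot all lie in the same class $E_i$. The mechanism is the triangle $(a,b,c)$ together with the attaching edges: each attaching edge, combined with edges of $(a,b,c)$ and with edges internal to the incident literal graph, participates in induced paths $P_4$, and enumerating these paths forces, for each literal, a relation between its state and the class into which its attaching edge and the incident edge of $(a,b,c)$ may be placed. I would then show that if all three states agreed, the triangle $(a,b,c)$ would be forced monochromatic while, simultaneously, two attaching edges at distinct triangle vertices would be forced into that same class, producing an induced $P_4$ of the form $w-a-b-x$ in $(V,E_i)$ with $w,x$ belonging to two different literal graphs (so that $w-b$, $a-x$, and $w-x$ are genuinely non-edges). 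Since no assignment compatible with the three forced literal states avoids such a monochromatic $P_4$, the all-equal case is impossible; this yields the first assertion, that exactly two of the three literal triangles lie in one $E_i$ and the third in $E_j$, $i\neq j$.

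From this state I would read off the coloring of $(a,b,c)$. The single literal whose triangle sits in $E_j$ forces its incident triangle edge $f$ into $E_j$, while the other two literals force their incident edges $e,e'$ into $E_i$; hence $(a,b,c)$ is split $2$--$1$, with $e,e'\in E_i$ and $f\in E_j\setminus E_i$, which is the second assertion. For the partition refinement I would further set $e,e'\notin E_j$: the only candidate induced $P_4$'s through $e$ or $e'$ in $(V,E_j)$ would require a second $E_j$-edge at $a$, $b$, or $c$, and the Not-All-Equal structure together with the already-fixed literal partitions guarantees that no such edge is forced into $E_j$. Consistency of the whole $2$--$1$ assignment (including the paths that run back into the literal graphs, which are exactly those already controlled) I would verify directly against the partition exhibited in Fig.~\ref{fig:clause}. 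The resulting disjoint $\{E_1,E_2\}$ is then a cograph 2-partition of the clause gadget, as claimed.

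The main obstacle I anticipate is the simultaneous case analysis in the second step: one must track, across all three attaching configurations at once, exactly which induced $P_4$'s arise, confirm that each is genuinely induced (using the absence of further edges in the gadget), and systematically rule out the all-equal state. This bookkeeping is where an error is most likely to hide; the remaining steps are either direct appeals to Lemma~\ref{lem:col-lit} or routine checks that the enumerated paths are consistent with the exhibited partition.
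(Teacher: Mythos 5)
Your overall strategy matches the paper's: invoke Lemma~\ref{lem:col-lit} to fix the unique $2$-partition of each extended literal graph, then analyze the attaching edges and the central triangle $(a,b,c)$. However, the key step---ruling out the case where all three literal triangles lie in the same class---is argued incorrectly. By Lemma~\ref{lem:col-lit} the attaching edges $[9,a],[9,c],\dots$ lie in the class \emph{opposite} to their literal triangle, so if all three literal states agree, all six attaching edges lie in one class, say $E_2$. The paths of the form $w-a-b-x$ that you name (attaching edge, triangle edge, attaching edge) then only \emph{force} each edge of $(a,b,c)$ \emph{out of} $E_2$; they can never themselves become monochromatic, and your claim that the triangle is ``forced monochromatic'' into the \emph{same} class as two attaching edges contradicts the forcing you just described. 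Placing all of $(a,b,c)$ in $E_1$ kills every $P_4$ of the form $w-a-b-x$, so your argument does not reach a contradiction.

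The missing idea is that the six attaching edges form a $6$-cycle $9-a-9'-b-9''-c-9$ lying entirely in $E_2$, and once the chords $[a,b],[a,c],[b,c]$ have been forced out of $E_2$, this cycle contains an induced $P_4$ in $(V,E_2)$, e.g.\ $9-a-9'-b$ (induced precisely because $[a,b]\notin E_2$ and $9,9',b$ are pairwise connected only as shown). That is the contradiction the paper uses, and without it the first assertion of the lemma is not established. The second part of your argument (reading off the $2$--$1$ split of $(a,b,c)$ and the degree of freedom giving a partition) is also looser than the paper's---there each triangle edge is pinned down by explicit paths such as $9-a-b-9''$ and $b-9''-c-9$ involving attaching edges of \emph{two different} literal gadgets, not by a single ``incident'' literal---but that portion could be repaired by direct case analysis; the genuine gap is the missing all-attaching-edge $P_4$.
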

\begin{proof}
	It is easy to verify that the given cograph 2-partition in Fig. \ref{fig:clause} 
	fulfills the conditions and is correct, as $G=(V,E_1)$ and $G=(V,E_2)$ are cographs.  


	As the clause gadget contains the literal graph as induced subgraph, the unique $2$-partition
	of the underlying literal graph is determined as by the construction given in Lemma \ref{lem:col-lit}. 
    Thus, each edge of the triangle in each underlying literal graph is contained in either one 
	of the sets $E_1$ or $E_2$. 
	Assume that edges of the triangles in the three literal gadgets are \emph{all} contained
	in the same set, say $E_1$. 
	Then, Lemma \ref{lem:col-lit} implies that 
	$[9,a], [9,c], [9',a],[9',b],[9'',b], [9'',c] \in E_1$
	and none of them is contained in $E_2$. 
	Since there are induced $P_4$'s: $9-a-b-9''$, $9'-a-c-9''$ and $9-c-b-9'$, 
	the edges 
	$[a,b], [a,c], [b,c]$ cannot be contained in $E_1$, and thus must be
	in $E_2$. However, this is 
	not possible, since then we would have the induced paths $P_4=9-a-9'-b$ in 
	the subgraph $(V,E_1)$
	a contradiction. Thus, the edges of the triangle of exactly one literal gadget 
    must be contained in a different set $E_i$ than
	the edges of the other triangles in the other two literal gadgets. 
	W.l.o.g. assume that the $2$-decomposition of the underlying 
	literal gadgets is given as in Fig.\ \ref{fig:clause}. 
	and identify bold-lined edges with $E_1$ and dashed edges with $E_2$. 
	
    It remains to show that this 2-decomposition of the underlying three 
	literal gadgets
	 determines 
	which of the edges
	of triangle $(a,b,c)$ are contained in which of the sets $E_1$ and $E_2$.	
	Due to the induced path $9-a-b-9''$ and since $[9,a],[9'',b]\in E_2$, the edge $[a,b]$ cannot be contained
	in $E_2$ and thus, is contained in $E_1$. 
	Moreover, if $[b,c]\not\in E_2$, then  
    then there is an induced path $P_4 = b-9''-c-9$ in the subgraph $(V,E_2)$, a
	contradiction. Hence,  $[b,c] \in E_2$ and by analogous arguments, $[a,c]\in E_2$. 
	If  $[b,c] \not\in E_1$ and  $[a,c] \not\in E_1$, then we
	obtain a cograph 2-partition. However, it can easily be verified that 
	there is still a degree of 
	freedom and $[a,c], [b,c]\in E_1$ is allowed for a 
	valid cograph 2-decomposition.
\end{proof}

We are now in the position to prove NP-completeness of \textsc{Cograph 2-Partition}
by reduction from the following problem. 

\begin{problem1}\textsc{Monotone NAE 3-SAT}\\[0.1cm]
\begin{tabular}{ll}
	  \emph{Input:}&Given a set $U$ of Boolean variables and a set of clauses \\
					&$\psi = \{C_1, \dots, C_m\}$ over $U$
					such that for all $i=1, \dots, m$ \\
					&it holds that $|C_i|=3$ and $C_i$ contains no negated variables. \\
	\emph{Question:}&Is there a truth assignment to $\psi$ such that in each $C_i$\\
					&not all three literals are set to true?
\end{tabular}
\end{problem1}

\begin{thm}[\cite{Sch78,moret-97}]
\textsc{Monotone NAE 3-SAT} is NP-complete. 
\end{thm}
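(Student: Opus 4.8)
The plan is to verify membership in NP and then establish NP-hardness by a polynomial-time many-one reduction. Membership is immediate: a truth assignment to the variables in $U$ is a certificate of size $O(|U|)$, and checking for each clause $C_i$ that its three literals are neither all true nor all false takes $O(m)$ time.

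For hardness, I would first observe that \textsc{Monotone NAE 3-SAT} is exactly the problem of $2$-colouring a $3$-uniform hypergraph so that no hyperedge is monochromatic (equivalently, \textsc{Set Splitting} restricted to $3$-element sets): the variables become vertices, the two truth values become the two colours, and the not-all-equal condition on a clause is precisely the requirement that the corresponding triple is not monochromatic. Since \textsc{Set Splitting} with sets of size $3$ is NP-complete, this already yields the claim; the rigorous backstop is Schaefer's dichotomy theorem \cite{Sch78}, under which the ternary not-all-equal relation without constants falls on the NP-complete side of the classification.

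To make the argument self-contained I would reduce from ordinary \textsc{3-SAT}. The first step transforms a $3$-CNF formula into a not-all-equal instance by appending one fixed fresh literal $s$ to every clause, turning $(l_1\vee l_2\vee l_3)$ into the $4$-ary clause $\mathrm{NAE}(l_1,l_2,l_3,s)$. Using the fact that the set of NAE-satisfying assignments is closed under global complementation, one may assume $s$ is set to false; then $\mathrm{NAE}(l_1,l_2,l_3,s)$ holds if and only if at least one $l_i$ is true, so the NAE-instance is satisfiable exactly when the original formula is. Each $4$-ary clause is then split into two $3$-ary clauses by means of a fresh auxiliary variable in the standard way.

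The remaining, and most delicate, step is to remove negations so as to land in the \emph{monotone} fragment. I would replace each occurrence of a negated literal $\lnot x$ by a fresh positive variable $x'$ and add consistency gadgets forcing $x'$ to take the value opposite to $x$. The obstacle here is precisely that NAE-satisfiability is invariant under flipping all variables, so no gadget can pin an individual variable to a constant, and a single pairwise inequality cannot be enforced by size-$3$ monotone NAE clauses in isolation. I would circumvent this by introducing a pair of global reference variables and writing the consistency clauses relative to them, so that the fixed \emph{relative} orientation of $x$ and $x'$ is enforced while the harmless global complement symmetry is allowed to remain. Verifying that these gadgets create no spurious solutions and destroy no genuine ones is the bookkeeping-heavy part I expect to demand the most care.
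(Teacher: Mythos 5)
The paper does not prove this statement at all: it is imported as a known result, cited to Schaefer's dichotomy theorem and to Moret. Your identification of \textsc{Monotone NAE 3-SAT} with $2$-colouring of $3$-uniform hypergraphs (equivalently \textsc{Set Splitting} on $3$-sets), together with the appeal to Schaefer, is exactly the justification the citation is standing in for, so up to that point you are reproducing the intended argument rather than deviating from it. Your self-contained reduction chain (3-SAT $\to$ NAE-4-SAT via a global padding literal $s$, normalising $s$ to false by complement symmetry, splitting $4$-ary clauses with a negated auxiliary, then monotonising) is the standard one and is correct in outline; note that the $4$-to-$3$ split must use the auxiliary $z$ positively in one clause and negatively in the other, since $\mathrm{NAE}(a,b,z)\wedge\mathrm{NAE}(z,c,d)$ is satisfiable even when $a=b=c=d$.

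The one genuine gap is the monotonisation gadget. You correctly diagnose that complement symmetry prevents pinning any variable, but the fix you propose --- ``a pair of global reference variables'' with consistency clauses written relative to them --- does not work as stated. With two references $p,q$ and clauses such as $\mathrm{NAE}(x,x',p)$ and $\mathrm{NAE}(x,x',q)$, a cheating assignment with $x=x'=v$ merely forces $p=q=\lnot v$, and nothing among size-$3$ monotone NAE clauses over $\{p,q\}$ alone can rule this out; if every cheating pair takes the same value $v$, the instance remains satisfiable. You need (at least) \emph{three} reference variables $p,q,r$ constrained by $\mathrm{NAE}(p,q,r)$, plus the three clauses $\mathrm{NAE}(x,x',p)$, $\mathrm{NAE}(x,x',q)$, $\mathrm{NAE}(x,x',r)$ for each variable $x$: then $x=x'$ forces $p=q=r=\lnot x$, contradicting $\mathrm{NAE}(p,q,r)$, while $x\neq x'$ satisfies all three clauses vacuously. (Fresh triples per variable work equally well.) With that substitution your argument goes through; the conclusion of course also follows directly from the cited NP-completeness of \textsc{Set Splitting} restricted to triples.
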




\begin{thm}
	\textsc{Cograph 2-Decomposition}, and thus, \textsc{Cotree 2-Representation} is NP-complete.
	\label{thm:npc}
\end{thm}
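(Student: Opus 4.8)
The plan is to prove NP-completeness in the usual two parts: membership in NP, and NP-hardness by a polynomial reduction from \textsc{Monotone NAE 3-SAT}. For membership, I would take as a certificate an explicit 2-decomposition, i.e.\ a pair of edge sets $E_1,E_2$ with $E_1\cup E_2=E$. Since cograph recognition is decidable in linear time \cite{corneil1985linear}, one can verify in polynomial time that both $(V,E_1)$ and $(V,E_2)$ are $P_4$-free, hence cographs; thus \textsc{Cograph 2-Decomposition} is in NP. Because this problem and \textsc{Cotree 2-Representation} were already shown to be equivalent, the same certificate settles membership for both.

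For the hardness direction I would, given an instance $(U,\psi)$ of \textsc{Monotone NAE 3-SAT}, build a graph $G_\psi$ out of the gadgets already analysed. For each clause $C_i=\{x,y,z\}$ I attach one clause gadget (Figure~\ref{fig:clause}), whose three extended literal graphs correspond to the three variable occurrences in $C_i$. The encoding is forced by Lemma~\ref{lem:col-lit}: in \emph{any} cograph 2-decomposition the triangle $(0,1,2)$ of each extended literal graph is monochromatic, so I read ``triangle in $E_1$'' as the literal being \emph{true} and ``triangle in $E_2$'' as \emph{false}. To make the occurrences of a single variable behave consistently, I would link all the extended literal graphs belonging to the same variable through their pending parts (the vertices/edges $[9,10],[9,11]$ of Lemma~\ref{lem:col-lit}, which are forced into the set opposite to the triangle), designing a variable gadget so that a valid 2-decomposition forces all occurrences of that variable to carry the same colour. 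The total construction is clearly of size polynomial in $|U|+|\psi|$.

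Correctness would then follow from the two lemmas in both directions. For the backward direction, from an arbitrary cograph 2-decomposition of $G_\psi$, Lemma~\ref{lem:col-lit} assigns a truth value to each occurrence, the variable gadget guarantees these agree per variable, and Lemma~\ref{lem:col-clause} forces, within each clause gadget, exactly two of the three literal triangles into one $E_i$ and the third into $E_j$. Since with three literals a $2$--$1$ colour split is exactly the condition ``not all three equal,'' every clause satisfies the NAE requirement, yielding a satisfying assignment. For the forward direction, from a NAE assignment I colour each literal triangle according to its truth value and extend to a full 2-decomposition using the explicit constructions verified in Lemmas~\ref{lem:col-lit} and~\ref{lem:col-clause}; the NAE guarantee produces precisely the $2$--$1$ split those constructions require, so each $(V,E_i)$ is a cograph.

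The main obstacle will be the variable-consistency linkage and the gluing of gadgets, rather than the within-gadget analysis, which the lemmas already supply. I must design the linkage so that (i) every valid 2-decomposition is forced to give all occurrences of a variable the same colour, while simultaneously (ii) any globally consistent NAE assignment still extends to a valid decomposition, and (iii) attaching gadgets introduces no unintended induced $P_4$ crossing gadget boundaries. The last point requires a careful but routine check that the only induced $P_4$'s of $G_\psi$ are the ones living inside the literal and clause gadgets. I would also need to keep track of the ``degree of freedom'' noted in Lemma~\ref{lem:col-clause} (that $[a,c],[b,c]$ may optionally be placed in $E_1$), confirming that it affects only whether the decomposition is a partition and never the extractable truth assignment, so that the reduction remains valid for the decomposition version.
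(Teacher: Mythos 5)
There is a genuine gap, and it sits exactly where you locate ``the main obstacle'': the variable-consistency linkage. Your construction instantiates one extended literal graph \emph{per occurrence} of a variable and then promises a further, unspecified ``variable gadget'' that forces all occurrences of the same variable to receive the same colour. That gadget is never designed, and its correctness would need its own forcing lemma analogous to Lemma~\ref{lem:col-lit}; it is not at all routine, since the pending edges $[9,10],[9,11]$ you propose to link through are already consumed by the attachment to the clause triangle $(a_i,b_i,c_i)$, and any new edges you add risk creating monochromatic induced $P_4$'s that break the ``forward'' direction of the reduction. Without this piece the reduction is incomplete: a 2-decomposition of your $G_\psi$ could colour two occurrences of the same variable differently, and no well-defined truth assignment can be extracted.

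The paper avoids the problem entirely by a different bookkeeping choice: it builds \emph{one} literal graph per variable (not per occurrence), and for each clause $C_i$ containing that variable it attaches a fresh pendant vertex $9_{i,j}$ to vertex $6$ of that single literal graph, together with the two edges joining $9_{i,j}$ to the clause triangle. Lemma~\ref{lem:col-lit} then forces every $[6,9_{i,j}]$ into the same class as the (monochromatic) triangle $(0,1,2)$ of that one literal graph, so consistency across clauses is automatic and no linkage gadget is needed; the only remaining global check is that no induced $P_4$ with all edges in one class crosses between clause gadgets, which the paper verifies using the forced colours of the edges around the shared literal graph. Your clause-level analysis (the $2$--$1$ split equals the NAE condition, and the slack in Lemma~\ref{lem:col-clause} only affects whether one gets a partition versus a decomposition) matches the paper, but as written your reduction does not yet establish NP-hardness.
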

\begin{proof}
	Given a graph $G=(V,E)$ and cograph 2-decomposition $\{E_1,E_2\}$, 
	one can verify in linear time whether $(V,E_i)$ is a cograph \cite{corneil1985linear}.
	Hence, \textsc{Cograph 2-Partition} $\in$ NP. 

   We will show by reduction from \textsc{Monotone NAE 3-SAT} that
   \textsc{Cograph 2-Decomposition} is NP-hard. 
   Let $\psi = (C_1, \dots, C_m)$ be an arbitrary instance of
	\textsc{Monotone NAE 3-SAT}.
	Each clause $C_i$ is identified with a triangle $(a_i,b_i,c_i)$.
	Each variable $x_j$ is identified with a literal graph 
	as shown in Fig.\ \ref{fig:literal} (left) and different
	variables are identified with different literal graphs. 
	Let $C_i = (x_{i_1}, x_{i_2}, x_{i_3})$ and $G_{i_1}$,
	$G_{i_2}$ and $G_{i_3}$ the respective literal graphs. 
	Then, we extend each
	literal graph $G_{i_j}$ by adding an edge $[6,9_{i,j}]$. 
	Moreover, we add to $G_{i_1}$ the edges $[9_{i,1},a_i], [9_{i,1},c_i]$, 
	to $G_{i_2}$ the edges $[9_{i,2},a_i], [9_{i,2},b_i]$, 
	to $G_{i_3}$ the edges $[9_{i,3},c_i], [9_{i,3},b_i]$. The latter construction
	connects each literal graph with the triangle $(a_i,b_i,c_i)$ of
	the respective clause $C_i$ in a unique way, see Fig. \ref{fig:clause}.
	We denote the clause gadgets
	by $\Psi_i$ for each clause $C_i$. We repeat this construction
	for all clauses $C_i$ of $\psi$ resulting in the graph  $\Psi$. 
	An illustrative example is given in   Fig.\ \ref{fig:formula}.	
	Clearly, this reduction can be done in polynomial time
	in the number $m$ of clauses.  

	We will show in the following that $\Psi$ has a 
	cograph 2-decomposition (resp., a cograph 2-partition) 
	if and only if $\psi$ has a truth assignment $f$. 

	Let $\psi = (C_1, \dots, C_m)$ have a truth assignment. 
	Then in each clause $C_i$ at least one of the literals $x_{i_1}, x_{i_2}, x_{i_3}$ 
	is set to true and one to false. 
	We assign all edges $e$ of the triangle in the 
	corresponding literal graph $G_{i_j}$ to $E_1$, 
	if 	$f(x_{i_j})=true$ and to $E_2$, otherwise.
	Hence, each edge of exactly two of the triangles (one in $G_{i_j}$ and one in $G_{i_{j'}}$ 
	contained in one $E_r$	and not in $E_s$, 
	while the edges of the other triangle in $G_{i_{j''}}$, $j''\neq j,j'$
	are contained in  $E_s$	and not in $E_r$, $r\neq s$, 
	as needed
	for a possible valid cograph 2-decomposition (Lemma \ref{lem:col-clause}). 
	We now apply the construction of a valid 2-decomposition (or 2-partition) 
	for each $\Psi_i$ as given in Lemma \ref{lem:col-clause}, 
	starting with the just created assignment of edges contained in 	
	the triangles in $G_{i_j}$, $G_{i_{j'}}$ and $G_{i_{j''}}$ to $E_1$ or $E_2$. In this way, we obtain a valid 
	2-decomposition (or 2-partition) for each subgraph $\Psi_i$ of $\Psi$.
	Thus, if there would be an induced $P_4$ in $\Psi$ with all edges 
	belonging to the same set $E_r$, 
	then this $P_4$ can only have edges belonging to different 
	clause gadgets $\Psi_k, \Psi_l$. 
	By construction, such a $P_4$ can only exist along 
	different clause gadgets $\Psi_k$ and $\Psi_l$ 
	only if $C_k$ and $C_l$  have a literal $x_i=x_{k_m}=x_{l_n}$ in 
	common. In this case, Lemma \ref{lem:col-clause} implies that 
	the edges $[6,9_{k,m}]$ and $[6,9_{l,n}]$ in $\Psi_i$ must belong
	to the same set $E_r$. 
	Again by Lemma \ref{lem:col-clause}, the edges $[9_{k,m},y]$
	and  $[9_{k,m},y']$,\ $y,y'\in \{a_k,b_k,c_k\}$ as well as
	the edges $[9_{l,n},y]$
	and  $[9_{l,n},y']$,\ $y,y'\in \{a_l,b_l,c_l\}$ must be
	in a different set $E_s$ than $[6,9_{k,m}]$ and $[6,9_{l,n}]$. 
	Moreover, respective edges $[5,6]$ in $\Psi_k$, as well as in 
	 $\Psi_l$ (Fig.\ \ref{fig:literal}) must then 
	be in $E_r$, i.e., in the same set as $[6,9_{k,m}]$ and $[6,9_{l,n}]$. 
	However, in none of the cases it is possible to find
	an induced $P_4$ with all edges 
	in the same set $E_r$ or $E_s$ along different clause gadgets. 
	Hence, we 
	obtain a valid cograph 2-decomposition, resp., cograph 2-partition of $\Psi$. 

\begin{figure}[t]
  \centering
  \includegraphics[bb=  9 490 607 679, scale=0.55]{./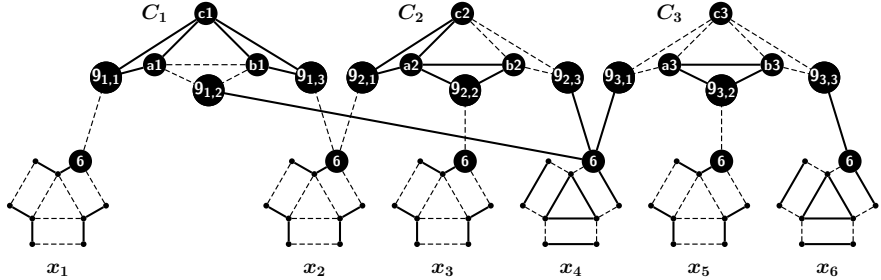}
  \caption{Shown is the graph $\Psi$ as constructed in the proof of Theorem \ref{thm:npc}.
			In particular, $\Psi$ reflects the NAE 3-SAT formula 
  			$\psi = \{C_1, C_2, C_3\}$ with clauses $C_1=(x_1,x_4,x_2), C_2=(x_2,x_3,x_4)$ and 
  				$C_3=(x_4,x_5,x_6)$. Different literals obtain the same truth assignment true or false, 
				whenever the edges of the triangle in their corresponding literal gadget are contained in 
				the same set $E_i$ of the cograph 2-partition, highlighted by dashed and bold-lined edges.  }
  \label{fig:formula}
\end{figure}

	Now assume that $\Psi$ has a valid cograph 2-decomposition (or a
	2-partition). Any variable $x_{i_j}$ contained in some clause $C_i =
	(x_{i_1}, x_{i_2}, x_{i_3})$ is identified with a literal graph
	$G_{i_j}$. Each clause $C_i$ is, by construction, identified with
	exactly three literal graphs $G_{i_1},G_{i_2},G_{i_3}$, resulting in
	the clause gadget $\Psi_i$. Each literal graph $G_{i_j}$ contains
	exactly one triangle $t_j$. Since $\Psi_i$ is an induced subgraph of
	$\Psi$, we can apply Lemma \ref{lem:col-clause} and conclude that for
	any cograph 2-decomposition (resp., 2-partition) all edges of exactly
	two of three triangles $t_1,t_2,t_3$ are contained in one set $E_r$,
	but not in $E_s$, and all edges of the other triangle are contained in
	$E_s$, but not in $E_r$, $s\neq r$. Based on these triangles we define
	a truth assignment $f$ to the corresponding literals: w.l.o.g.\ we
	set $f(x_i)=$\emph{true} if the edge $e\in t_i$ is contained in $E_1$
	and $f(x_i)=$\emph{false} otherwise. By the latter arguments and Lemma
	\ref{lem:col-clause}, we can conclude that, given a valid cograph
	2-partitioning, the so defined truth assignment $f$ is  a valid truth
	assignment of the Boolean formula $\psi$, since no three different literals in one
	clause obtain the same assignment and at least one of the variables is
	set to $\emph{true}$. Thus, \textsc{Cograph 2-Decomposition} is
	NP-complete
	
	Finally, because \textsc{Cograph 2-Decomposition} and \textsc{Cotree 2-Representation} 
	are equivalent problems, 
	the NP-completeness of \textsc{Cotree 2-Representation} follows. 
\end{proof}

As the proof of Theorem \ref{thm:npc} allows us to use cograph 2-partitions in all proof steps,
instead of cograph 2-decompositions, we can immediately infer the
NP-completeness of the following problem for k=2, as well. 


\begin{problem1}\textsc{Cograph $k$-Partition}\\[0.1cm]
\begin{tabular}{ll}
	 \emph{Input:}&Given a graph $G=(V,E)$ and an integer $k$. \\
	 \emph{Question:}&Is there a Cograph $k$-Partition of $G$?
\end{tabular}
\end{problem1}

\begin{thm}
	\textsc{Cograph 2-Partition} is NP-complete. 
\end{thm}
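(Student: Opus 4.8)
The plan is to show that the reduction already built for Theorem~\ref{thm:npc} proves this stronger statement with essentially no extra work, once disjointness is tracked through every step. First I would settle membership: \textsc{Cograph 2-Partition} $\in$ NP, since given a candidate $\{E_1,E_2\}$ one checks disjointness trivially and then verifies in linear time, via the cograph-recognition algorithm of \cite{corneil1985linear}, that both $(V,E_1)$ and $(V,E_2)$ are cographs.

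For NP-hardness I would reduce from \textsc{Monotone NAE 3-SAT} using \emph{verbatim} the same graph $\Psi$ assembled from literal graphs, extended literal graphs, and clause gadgets as in the proof of Theorem~\ref{thm:npc}. The reason this suffices is that every local building block was stated in a form compatible with partitions and not merely decompositions: Lemma~\ref{lem:col-lit} asserts that for the (extended) literal graphs every cograph 2-decomposition is in fact a uniquely determined 2-\emph{partition}, and Lemma~\ref{lem:col-clause} asserts that the clause gadget admits a cograph 2-decomposition which \emph{can be chosen} to be a partition, by taking $e,e'\notin E_j$. Thus the only place in the construction where the two classes were allowed to overlap is eliminated by making the partition choice.

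Then I would verify the equivalence ``$\psi$ is NAE-satisfiable iff $\Psi$ has a cograph 2-partition.'' For the backward direction nothing new is required: a 2-partition is a fortiori a 2-decomposition, so the argument of Theorem~\ref{thm:npc} that extracts a valid NAE truth assignment from any 2-decomposition of $\Psi$ applies unchanged. For the forward direction, given a satisfying assignment $f$ I would run the same edge-assignment as in Theorem~\ref{thm:npc} (triangle edges of $G_{i_j}$ into $E_1$ if $f(x_{i_j})=\true$ and into $E_2$ otherwise), then extend each clause gadget via the \emph{partition} option of Lemma~\ref{lem:col-clause}; since within each literal gadget the decomposition is forced to be a partition (Lemma~\ref{lem:col-lit}) and the triangle assignment is determined consistently per variable by $f$, the resulting classes $E_1,E_2$ are globally disjoint.

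The hard part—indeed the only point where ``decomposition'' and ``partition'' could diverge—is ensuring that the local partitions selected in distinct clause gadgets glue into one global partition without forcing a shared edge into both classes. This reduces to re-reading precisely the induced-$P_4$ analysis already performed in Theorem~\ref{thm:npc} along shared literal gadgets (the edges $[6,9_{k,m}]$ and $[6,9_{l,n}]$, the edges $[9_{k,m},y]$, $[9_{l,n},y]$, and the respective copies of $[5,6]$): I would check that the forced membership of these shared edges is consistent with the single per-variable assignment, so no edge is simultaneously required in $E_1$ and $E_2$ and no monochromatic induced $P_4$ survives. Because that analysis goes through unchanged under the partition choice, the equivalence holds and \textsc{Cograph 2-Partition} is NP-complete.
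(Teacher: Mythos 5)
Your proposal is correct and matches the paper's own treatment: the paper likewise derives this theorem directly from the reduction of Theorem~\ref{thm:npc}, observing that every step there can be carried out with cograph 2-partitions (via the forced partition of Lemma~\ref{lem:col-lit} and the partition option of Lemma~\ref{lem:col-clause}) and that the backward direction holds a fortiori since any 2-partition is a 2-decomposition. Your write-up is in fact more explicit than the paper's one-sentence justification, but the route is the same.
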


As a direct consequence of the latter results, we obtain the 
following theorem. 


\begin{thm}
Let $G$ be a given graph that is not a cograph. 
The following three optimization problems to find the least integer $k>1$ so that 
there is  a Cograph $k$-Partition, or a Cograph $k$-Decomposition, or
a Cotree $k$-Representation
for the graph $G$, are NP-hard. 
\end{thm}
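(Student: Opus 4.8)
The plan is to observe that each of the three optimization problems is at least as hard as the corresponding decision problem at $k=2$, all of which the preceding theorems established to be NP-complete. First I would recall that \textsc{Cograph 2-Decomposition}, \textsc{Cograph 2-Partition}, and (by the equivalence noted above) \textsc{Cotree 2-Representation} are NP-complete, and --- crucially --- that the reduction from \textsc{Monotone NAE 3-SAT} in the proof of Theorem \ref{thm:npc} always produces a graph $\Psi$ that is \emph{not} a cograph, since $\Psi$ contains induced paths $P_4$. Hence, even when the input is restricted to non-cographs, deciding the existence of a cograph $2$-partition (resp.\ a $2$-decomposition, resp.\ a $2$-representation) remains NP-complete; this is precisely the class of inputs to which the present statement refers.

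Next I would exhibit a polynomial-time Turing reduction from each such decision problem to the corresponding optimization problem. Consider, say, the partition variant: given any non-cograph $G$, a cograph $1$-partition cannot exist (it would force $G$ itself to be a cograph), so the optimal value $k^*$ --- the least $k>1$ for which $G$ has a cograph $k$-partition --- satisfies $k^*\geq 2$, and $G$ admits a cograph $2$-partition if and only if $k^*=2$, equivalently $k^*\leq 2$. Therefore a single call to an algorithm solving the optimization problem, followed by the comparison $k^*\leq 2$, decides \textsc{Cograph 2-Partition} on non-cograph inputs. Since the latter is NP-complete, the optimization problem is NP-hard. The identical argument, with the respective NP-complete decision problems substituted in, handles the decomposition and cotree-representation variants verbatim.

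The only delicate point --- and the main, if modest, obstacle --- is making sure that the NP-completeness of the decision problems genuinely persists under the restriction to non-cograph inputs imposed by the statement, so that the test ``$k^*=2$'' is both well-defined and decisive. This is exactly what the explicit $P_4$-containing structure of the gadget graph $\Psi$ guarantees. With that in hand, the three NP-hardness claims follow at once from Theorem \ref{thm:npc} and the NP-completeness of \textsc{Cograph 2-Partition}.
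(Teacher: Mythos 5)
Your proposal is correct and matches the paper's (implicit) argument: the paper states this theorem as a direct consequence of the NP-completeness of the $k=2$ decision problems, which is precisely the reduction you spell out. Your added observation that the gadget graph $\Psi$ is itself not a cograph (so the hardness survives the restriction to non-cograph inputs) is a detail the paper leaves tacit, and it is correct since the literal graphs contain induced $P_4$'s.
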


\section*{Acknowledgment}
This work was funded by the German Research Foundation
(DFG) (Proj. No. MI439/14-1).

%
%

\bibliographystyle{plain}	
\bibliography{paper}

\end{document}